\newtheorem{theorem}{Theorem}
\newtheorem{lemma}{Lemma}
\newtheorem{definition}{Definition}
\newtheorem{remark}{Remark}
\theoremstyle{remark}
\newtheorem{example}{Example}
\newcommand\set[1]{\left\{#1\right\}}
\newcommand\norm[1]{\left\lVert#1\right\rVert}
\newcommand\map[3]{#1:#2\to#3}
\newcommand\NRp{\mathbb{R}_{\ge0}}
\begin{document} 
\title{On the Balance of Unrooted Trees}
\author{Mareike Fischer}
\ead{email@mareikefischer.de}
\author{Volkmar Liebscher}
\ead{volkmar.liebscher@uni-greifswald.de}
\address{Ernst-Moritz-Arndt University Greifswald, Institute for  Mathematics and Computer Science, Walther-Rathenau-Stra{\ss}e 47
D-17487 Greifswald/Germany} 
\date{\day27\month10\year2015\relax\today
}

\begin{abstract}
We solve a class of optimization problems for (phylogenetic) $X$-trees or their shapes. These problems have recently appeared in different contexts, e.g. in the context of the impact of tree shapes on the size of TBR neighborhoods, but so far these problems have not been characterized and solved in a systematic way. In this work we generalize the concept and also present several applications. Moreover, our results give rise to a nice notion of balance for trees. Unsurprisingly, so-called caterpillars are the most unbalanced tree shapes, but it turns out that balanced tree shapes cannot be described so easily as they need not even be unique. \end{abstract}
\begin{keyword}
  phylogenetic trees\sep splits\sep caterpillars\sep semi-regular trees\sep NNI-moves
\end{keyword}
\maketitle
\section{Introduction}

When phylogenetic trees are considered, i.e. trees describing the evolutionary history of $n$ present-day species which label the leaves of the tree, one is often confronted with the need to find the extreme values of the expression 

 \begin{displaymath}
  \Phi_f(\tau)= \sum _{\sigma\in \Sigma^*(\tau)}f(\norm\sigma).
 \end{displaymath}
 
Here, $\tau$ varies over all phylogenetic trees with $n$ leaves, and $\Sigma^*(\tau)$ is the set of so-called non-trivial splits of $X=\set{1,\dots,n}$ induced by $\tau$. Recall that a split $\sigma$ of a set $X$ is just a bipartition of this set into two non-empty subsets $A$ and $B$ (we then write $\sigma=A|B$), and a split is called non-trivial whenever both parts of the bipartition have cardinality at least 2. Splits of the species set $X$ play an important role in mathematical phylogenetics, because every edge of a phylogenetic tree induces such a split, and splits are non-trivial if and only if they are induced by an inner edge of the tree (i.e. not by an edge connected to a leaf). Moreover, in the above definition of $ \Phi_f(\tau)$, $\norm{A|B}=\min(|A|,|B|)$ denotes the cardinality of the smaller part of the split. 

Such expressions recently appeared in different contexts, each time with a different choice of the monotone function $\map f{\set{2,\dots,\lfloor n/2\rfloor}}\NRp$:
 \begin{enumerate}
 \item The size of a TBR-neighborhood of the tree $\tau$ with
   $f(k)=k(n-k)$, see \cite{HW12}. This example is also related to the so-called Wiener index of the tree \cite{SWW11}. For this index, you need to assign edge lengths to all edges of $\tau$. Then, the Wiener index is defined as $W(\tau) = \sum\limits_{u,v \in V(T)} d_\tau(u,v)$, where $V(\tau)$ refers to the vertex set of $\tau$ and $d_\tau$ refers to the pairwise distances between any two vertices induced by the edge lengths.   
 \item An estimate of the diameter of the (unweighted) tree space under Gromov-type distance measures introduced in \cite{Lie15} with $f(k)=k$ ($\ell^1-$Gromov) and $f(k)=\sqrt{k(n-k)}$ ($\ell^2-$Gromov).
 \item The number of so-called cherries of a tree, i.e. splits $\sigma=A|B$ with $\norm{A|B}=\min(|A|,|B|)=2$, can be described with $f(k)=\left\{
     \begin{array}[c]{cl}
       1&k=2\\0&k>2
     \end{array}
\right.$.
 \end{enumerate}
 
In many situations, it turns out that the extremal shapes (giving the minimal or maximal values of the functional) are so-called caterpillars (i.e. rooted trees with just one cherry or unrooted trees with just two cherries) and so-called semi-regular trees, i.e. trees which have at most one non-leaf vertex that is not of maximum degree and that such a vertex, if it exists, cannot be adjacent to more than one non-leaf vertex \cite{SWW11}. While the first kind of trees, the caterpillars, are often considered the most unbalanced tree shapes, the latter kind, i.e. the semi-regular trees, are considered to be most balanced. This reoccurence is challenging, as there seems to be no study of the whole family of functionals with $f$ broadly varying. We regard the present note as the beginning of this study. 

But how is $\Phi_f(\tau)$ measuring the balance of the shape of $\tau$? Intuitively, splits $\sigma\in\Sigma^*(\tau)$ with a high value of $\norm\sigma$ are very balanced, and it seems that the maximum of $\Phi_f(\tau)$ should be attained by the most balanced shapes. But, more importantly, the balanced tree shapes display  \emph{more}  splits $\sigma$ with a small $\norm\sigma$, for example cherries. Thus they realise a smaller  value of $\Phi_f$ than caterpillars.
  
In fact, the Arxiv version of \cite{HW13}, namely \cite{HW12}, provided already the main idea for deriving the maximal value for increasing functions $f$. As this Arxiv version provides a few more details than the published manuscript, we will subsequently refer to that version. Anyway, a lot of structure is required for deriving the minimal value, whereas the maximum is significantly easier to prove, cf. \cite{HW12,SWW11}. 

We will show in the following that for general functions $f$, the main principle becomes even more transparent when considering particular functionals. Last but not least, we also give further applications to topological indices. 

\section{Preliminaries}

Let $\mathcal T_n$ be the set of all (unrooted) phylogenetic trees (i.e. connected acyclic graphs with leaves labelled by a so-called taxon set $X$) with $|X|=n\ge 6$ leaves,  and $\mathcal T^2_n$ the subset of $\mathcal T_n$ which contains all fully resolved (i.e. binary, bifurcating) trees in $\mathcal T_n$; i.e. the trees in $\mathcal T^2_n$ have the property that all vertices have either degree 3 (inner nodes) or 1 (leaves). When there is no ambiguity, we often just say tree when referring to a phylogenetic tree or, when the leaf labeling is not important, its so-called tree shape, respectively. 

Caterpillars with $n$ leaves are binary phylogenetic trees with two leaves, say $1,n$, such that every vertex of the tree is on the path from $1$ to $n$ or adjacent to a vertex on this path. In the so-called Newick format \cite{newick}, which gives a nested list of all leaves such that leaves which are separated by fewer edges in the tree are also separated by fewer brackets, caterpillars may be denoted by the expression
 \begin{displaymath}
   \tau_c=(((1,2),3),\ldots, n).
 \end{displaymath}
If $\tau\in \mathcal T_n$, let $\Sigma(\tau)$ denote the set of all splits, i.e. all partitions of the leaf set $X$ into two non-empty subsets $A$ and $B$, and let $\Sigma^*(\tau)$ be the set of all non-trivial splits $\sigma=A|B$ induced by inner edges of $\tau$, i.e. $\Sigma^*(\tau)$ contains all splits for which both $|A| \geq 2$ and $|B|\geq 2$.  For a split $\sigma = A|B$ let  $\norm\sigma=\norm{A|B}=\min(|A|,|B|)$ denote its size.

Now we introduce for any function $\map f{\set{2,\dots,\lfloor n/2\rfloor}}\NRp$  the functional $\map{\Phi_f}{\mathcal T_n}\NRp$ via
 \begin{displaymath}
   \Phi_f(\tau)=\sum _{\sigma\in \Sigma^*(\tau)}f(\norm\sigma).
 \end{displaymath}

Clearly, a tree and its contraction, obtained by suppressing all inner vertices of degree 2, share the same value of $\Phi_f$. A rotation of the tree, which is  obtained by permuting the leaf labels, does not alter the value of $\Phi_f$ either. So $\Phi_f$ is just a function of the phylogenetic tree shape of $\tau$. Particularly, all caterpillars get  the same value under $\Phi_f$.     

The last concept we need to introduce before we can present our results are so-called NNI-moves on binary trees. NNI stands for Nearest Neighbor Interchange, and in order to perform an NNI-move on a binary tree $\tau$, you fix an inner edge of $\tau$. This edge is connected to four subtrees, two on either side. You then swap two of these subtrees from opposite sides of the edge. This procedure is called NNI-move, and the resulting tree $\tau'$ is called an NNI-neighbor of $\tau$. You can also define a metric based on NNI in order to measure the distance between two trees $\tau$, $\tau'$ such that $d(\tau,\tau')$ equals the minimum number of NNI moves needed to get from $\tau$ to $\tau'$.

\section{Results for increasing functions $f$}

In this section, we consider a function $\map f{\set{1,\dots,\lfloor n/2\rfloor}}\NRp$ with the additional assumption that $f$ is monotonously increasing, i.e. if $x >y$, then $f(x) \geq f(y)$. The following theorem, which is based on ideas presented in \cite{HW12}, shows that caterpillars maximize $\Phi_f$ in this case.

\begin{theorem}\label{thm:1}
  Let $\tau_c\in\mathcal T_n$ be a caterpillar and $\map f{\set{2,\dots,\lfloor n/2\rfloor}}\NRp$ monotonously increasing. Then for all $\tau\in\mathcal T_n$
  \begin{displaymath}
    \Phi_f(\tau_c)\ge\Phi_f(\tau).
  \end{displaymath}

If $f$ is strictly increasing and $\tau\in \mathcal T_n$ is a point of maximum of $\Phi_f$, then $\tau$ is a caterpillar. 
\end{theorem}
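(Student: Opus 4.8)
The plan is to reduce the whole statement to a single combinatorial counting lemma about the level sets of the split sizes, and then to extract both the inequality and the rigidity from it.

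First I would dispose of the non-binary trees. If $\tau\in\mathcal T_n$ is not binary, I pick any binary refinement $\tau'\in\mathcal T^2_n$; every split of $\tau$ survives in $\tau'$, so $\Sigma^*(\tau)\subseteq\Sigma^*(\tau')$ and, since $f\ge0$, $\Phi_f(\tau)\le\Phi_f(\tau')$. Hence it suffices to prove $\Phi_f(\tau)\le\Phi_f(\tau_c)$ for $\tau\in\mathcal T^2_n$. For binary trees I would rewrite the functional through its level sets: putting $N_\tau(t)=\abs{\set{\sigma\in\Sigma^*(\tau):\norm\sigma\ge t}}$ and $f(1):=0$, Abel summation gives
\[
  \Phi_f(\tau)=\sum_{t=2}^{\lfloor n/2\rfloor}\bigl(f(t)-f(t-1)\bigr)\,N_\tau(t).
\]
Because $f$ is increasing (and nonnegative) every coefficient $f(t)-f(t-1)$ is nonnegative, so the whole first assertion follows once I show that the caterpillar simultaneously maximizes every $N_\tau(t)$.

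The technical heart is therefore the counting lemma: for every $\tau\in\mathcal T^2_n$ and every $t$ with $2\le t\le\lfloor n/2\rfloor$,
\[
  N_\tau(t)\le n+1-2t=N_{\tau_c}(t),
\]
the right-hand value being an elementary computation for the caterpillar. To prove the bound I would call an edge $t$-balanced if both sides of its split contain at least $t$ leaves, and first observe that the $t$-balanced edges form a connected subtree $S_t$ (if $e$ lies on the path between two $t$-balanced edges, each side of $e$ inherits one of their large far-sides). Since pendant edges are never $t$-balanced for $t\ge2$, $S_t$ sits inside the inner tree, and I would analyze its boundary: if $S_t$ has $m$ edges it has $m+3$ edges leaving it, cutting off pairwise disjoint subtrees with leaf counts $\ell_1,\dots,\ell_{m+3}$ summing to $n$. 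At each of the (at least two) tips of $S_t$ the incident $t$-balanced edge forces the two hanging subtrees to have sizes $a,b\le t-1$ with $a+b\ge t$, contributing ``excess'' $(a-1)+(b-1)\ge t-2$; summing over two tips gives $\sum_j(\ell_j-1)\ge2t-4$, i.e.\ $m+3\le n-2t+4$. This is the step I expect to be the real obstacle, since it is where all the tree structure actually enters; getting the tip bookkeeping and the degenerate cases ($S_t$ empty or a single edge) exactly right is the crux.

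For the rigidity statement I would run the level-set identity in reverse. With $f$ strictly increasing, $\Phi_f(\tau_c)-\Phi_f(\tau)=\sum_t\bigl(f(t)-f(t-1)\bigr)\bigl(N_{\tau_c}(t)-N_\tau(t)\bigr)$ is a sum of nonnegative terms, and the $t=3$ term has strictly positive coefficient $f(3)-f(2)$. Now a binary tree has exactly two cherries if and only if it is a caterpillar, and cherries are precisely the size-two splits, so $N_\tau(3)=(n-3)-\#\text{cherries}$; a binary non-caterpillar therefore has $N_\tau(3)\le n-6<N_{\tau_c}(3)$, making the $t=3$ term strictly positive and forcing $\Phi_f(\tau)<\Phi_f(\tau_c)$. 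The one genuinely delicate point is the return through non-binary trees: a refinement can tie $\Phi_f$ only if every newly created split contributes zero, which for strictly increasing $f$ can happen solely through cherries with $f(2)=0$. Thus the clean ``every maximizer is a caterpillar'' conclusion needs $f(2)>0$ (equivalently $f>0$); when $f(2)=0$ one only obtains that some maximizer is a caterpillar, and I would flag this boundary case explicitly.
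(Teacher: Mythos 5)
Your proposal is correct, but it takes a genuinely different route from the paper's. The paper argues by local exchange: it takes a maximizer $\tau$ for strictly increasing $f$, fixes two cherries, and shows that if some subtree $Y_i$ hanging off the path between them has $y_i\ge 2$ leaves, then moving $Y_i$ next to the cherry $x_1,x_2$ strictly increases $\Phi_f$ (via the shifted-sum inequality of Lemma~\ref{lem:turmsum}), contradicting maximality; the merely increasing case is then handled by perturbing to $f_\varepsilon(k)=f(k)+\varepsilon k$ and letting $\varepsilon\downarrow 0$. You instead prove the sharp pointwise level-set bound $N_\tau(t)\le n+1-2t=N_{\tau_c}(t)$ and recover $\Phi_f$ by Abel summation, so both the inequality and the rigidity fall out of one identity with no perturbation step. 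Your counting lemma is sound as sketched: path-closedness makes the $t$-balanced edges a subtree on inner (degree-3) vertices, a subtree with $m$ edges has $m+3$ boundary subtrees, and the two tips each contribute excess $a+b-2\ge t-2$ because the near side of the tip's balanced edge consists exactly of its two hanging subtrees; the degenerate cases $m=0$ and $m=1$ check out. Note that your lemma is strictly stronger than Theorem~\ref{thm:1}: it shows $s(\tau)_i\le s(\tau_c)_i$ for all $i$, i.e.\ it proves directly the paper's later result (via Theorems~\ref{thm:1} and~\ref{thm:2}) that the caterpillar is the unique Pareto maximum of $\mathcal S_n$. Finally, your boundary-case flag is a genuine catch rather than excess caution: for strictly increasing $f$ with $f(2)=0$, which the hypotheses permit since $f$ need only map into $\NRp$, contracting one cherry edge of a caterpillar removes a single split of size $2$ and so yields a non-binary maximizer of $\Phi_f$ that is not a caterpillar; the paper's proof silently assumes the maximizer has two cherries and the pictured path structure with single subtrees $Y_i$ at each path vertex, which is exactly where this case slips through. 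Your corrected formulation --- every maximizer is a caterpillar when $f(2)>0$, and some maximizer is a caterpillar in general --- is the accurate statement.
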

Before presenting the proof, we consider the following elementary lemma, which we will need subsequently. 

\begin{lemma}\label{lem:turmsum}
Let  $\map f{\set{2,\dots,\lfloor n/2\rfloor}}\NRp$ be strictly monotonously increasing. Then for all $m\in \set{2,\dots,n-2}$ and $p\in \set{1,\dots,n-3-m}$
\begin{displaymath}
  \sum_{l=2}^mf(\min(l,n-l))<  \sum_{l=2+p}^{m+p}f(\min(l,n-l)). 
\end{displaymath}
\end{lemma}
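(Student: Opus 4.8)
The plan is to reduce the inequality to a comparison of two equal-size blocks of terms and then to a purely combinatorial domination statement. Throughout write $g(l)=f(\min(l,n-l))$, so that both sides are sums of $g$ over windows of $m-1$ consecutive integers, the right-hand window being the left-hand one shifted up by $p$. First I would cancel the overlap by passing to cumulative sums: since $\sum_{l=2+p}^{m+p}g(l)=\sum_{l=2}^{m+p}g(l)-\sum_{l=2}^{p+1}g(l)$, subtracting $\sum_{l=2}^{m}g(l)$ yields the identity $\sum_{l=2+p}^{m+p}g(l)-\sum_{l=2}^{m}g(l)=\sum_{l=m+1}^{m+p}g(l)-\sum_{l=2}^{p+1}g(l)$, which is valid whether or not the two windows overlap. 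Hence it suffices to prove $\sum_{i=1}^{p}g(m+i)>\sum_{i=1}^{p}g(i+1)$, i.e. that the $p$ ``upper'' terms strictly dominate the $p$ ``lower'' terms.

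Next I would linearise $f$ by a layer-cake (summation-by-parts) decomposition: writing $f(k)=f(2)+\sum_{j=3}^{k}\Delta f(j)$ with increments $\Delta f(j)=f(j)-f(j-1)>0$, and using that the two blocks $T=\set{m+1,\dots,m+p}$ and $B=\set{2,\dots,p+1}$ have the same cardinality $p$ (so that the $f(2)$-contributions cancel), the difference becomes $\sum_{j\ge 3}\Delta f(j)\,(N_T(j)-N_B(j))$, where $N_S(j)=\abs{\set{l\in S:\min(l,n-l)\ge j}}=\abs{S\cap[j,\,n-j]}$ counts how many indices of $S$ are ``at least as central as level $j$''. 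Because every $\Delta f(j)$ is strictly positive, the whole problem now reduces to the combinatorial claim $N_T(j)\ge N_B(j)$ for every $j\ge 3$, with strict inequality for at least one $j$.

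For the domination I would exploit that $T$ is exactly $B$ shifted to the right by $s:=m-1\ge 1$, so that $N_T(j)=\abs{B\cap[j-s,\,n-j-s]}$ is the count of the fixed left-anchored block $B$ (whose minimal element is the extreme index $2$) inside the symmetric window $[j,n-j]$ pushed $s$ steps to the left, i.e. toward $B$. Sliding the window toward the block can only capture weakly more of it, provided its right end does not shed elements of $B$ faster than its left end gains them; this is precisely what the hypothesis $p\le n-3-m$ (equivalently $m+p\le n-3$) guarantees. I expect this monotonicity to be the main obstacle, because in the large-$p$ regime both windows may reach past the midpoint $n/2$, where $g$ is decreasing and a naive term-by-term or single-step comparison actually fails (a unit shift of a high window can lower the sum); the layer-cake reformulation is what converts the statement into the robust count inequality, which I would finish by a short case analysis according to where $j$ and $n-j$ fall relative to the endpoints $2$ and $p+1$ of $B$.

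Finally, strictness is cheap and can be pinpointed at $j=3$: since $m+1\ge 3$ and $m+p\le n-3$, the whole block $T$ lies in $[3,n-3]$, so $N_T(3)=p$, whereas $B\cap[3,n-3]$ discards exactly its minimal element $2$ (its maximum $p+1\le n-4$ remaining inside), so $N_B(3)=p-1$. Thus $N_T(3)-N_B(3)=1$, and the corresponding strictly positive increment $\Delta f(3)=f(3)-f(2)$ forces the total difference to be at least $\Delta f(3)>0$, which gives the desired strict inequality.
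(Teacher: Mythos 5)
Your argument is correct, and it takes a genuinely different route from the paper's. The paper proves the lemma by symmetry plus telescoping: since $l\mapsto\min(l,n-l)$ is invariant under $l\mapsto n-l$, the shifted sum $g(p)=\sum_{l=2+p}^{m+p}f(\min(l,n-l))$ satisfies $g(p)=g(n-2-m-p)$, so one may assume $p\le\lfloor(n-2-m)/2\rfloor$, and in that range each unit shift trades the departing term $f(\min(2+p,n-2-p))$ for the strictly larger entering term $f(\min(m+p+1,n-m-p-1))$; your aside that the naive one-step comparison fails for large $p$ is exactly right, and the symmetry reduction is precisely how the paper repairs it. You avoid symmetry altogether: after cancelling the overlap you decompose $f$ along the threshold functions $f_j$ (with $f_j(k)=1$ for $k\ge j$ and $0$ otherwise), reducing the lemma to the counting inequality $N_T(j)\ge N_B(j)$. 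This is the same extremal-ray decomposition the paper deploys only later, in the proof of Theorem \ref{thm:2} and the remark following it, so you in fact establish more than the lemma asks: the shifted block dominates the original one in the order $\prec$, i.e.\ the inequality holds simultaneously for every increasing $f$, with strictness located at $j=3$ exactly as you compute. The one step you leave as a sketch --- the case analysis for $N_T(j)\ge N_B(j)$ --- does go through quickly: for $j\ge m+1$ both counts have the form $\max\left(0,\min(u,n-j)-j+1\right)$ with $u=m+p$ for $T$ and $u=p+1$ for $B$, and $\min(m+p,n-j)\ge\min(p+1,n-j)$ because $m\ge2$; for $3\le j\le m$ one has $n-j\ge n-m\ge p+3$, hence $N_B(j)=\max(0,p+2-j)$, and when this is positive (so $j\le p+1$, whence $n-j-m\ge n-m-p-1\ge2$) one gets $N_T(j)=\min(p,\,n-j-m)$ with both $p\ge p+2-j$ (as $j\ge2$) and $n-j-m\ge p+2-j$ (as $p\le n-m-3$). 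In sum, the paper's proof is shorter, while yours buys the stronger dominance statement, dispenses with the symmetry trick, and makes transparent exactly where the hypothesis $m+p\le n-3$ enters.
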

\begin{proof}
  For symmetry reasons we may assume that $p\le \lfloor(n-2-m)/2\rfloor$ and consider the function $\map g{\set{0,\dots,\lfloor(n-2-m)/2\rfloor}}\NRp$,
  \begin{displaymath}
g(p)=\sum_{l=2+p}^{m+p}f(\min(l,n-l)).
\end{displaymath}
We find
  \begin{displaymath}
    g(p+1)-g(p)=f(\min(m+p+1,n-m-p-1))-f(\min(2+p,n-2-p))>0
  \end{displaymath} 
since $\min(m+p+1,n-m-p-1)>\min(2+p,n-2-p)$.
\end{proof}
\begin{proof}[Proof of Theorem \ref{thm:1}.]
Consider first the case that $f$ is strictly increasing and $\tau$ is such that $\Phi_f(\tau)$ is maximal. 
We now follow directly the arguments in the proof of \cite[Lemma 4.1]{HW12}. 

Fix two cherries $x_1,x_2$ and $x_3,x_4$ of $\tau$  and partition $\set{1,\dots,n}$ by the edges of the path from $x_1$ to $x_4$, see Fig. \ref{fig:proof1}.
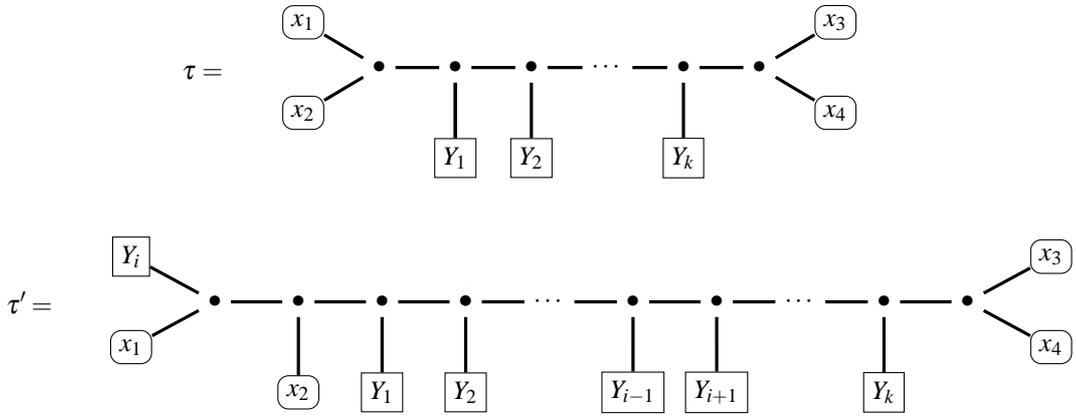
\begin{figure}[htbp]
  \centering
  \begin{displaymath}
\tau=\qquad\begin{tikzpicture}[xscale=1,yscale=0.6, baseline =30 ]
\node (x1) at (0,3) [rectangle,rounded corners,draw=black] {$x_1$};
\node (x2) at (0,1) [rectangle,rounded corners,draw=black] {$x_2$};
\node (z1) at (1,2)  {$\bullet$};
\node (z2) at (2,2)  {$\bullet$};
\node (z3) at (3,2)  {$\bullet$};
\node (ppp) at (4,2)  {$\cdots$};
\node (zk) at (5,2)  {$\bullet$};
\node (zk+1) at (6,2)  {$\bullet$};
\node (x3) at (7,3) [rectangle,rounded corners,draw=black] {$x_3$};
\node (x4) at (7,1) [rectangle,rounded corners,draw=black] {$x_4$};
\node (y1) at (2,0) [rectangle,draw=black] {$Y_1$};
\node (y2) at (3,0) [rectangle,draw=black] {$Y_2$};
\node (yk) at (5,0) [rectangle,draw=black] {$Y_k$};
\draw[very thick] (x1) to (z1);
\draw[very thick] (x2) to (z1);
\draw[very thick] (y1) to (z2);
\draw[very thick] (y1) to (z2);
\draw[very thick] (y2) to (z3);
\draw[very thick] (yk) to (zk);
\draw[very thick] (z1) to (z2);
\draw[very thick] (z2) to (z3);
\draw[very thick] (z3) to (ppp);
\draw[very thick] (ppp) to (zk);
\draw[very thick] (zk) to (zk+1);
\draw[very thick] (x3) to (zk+1);
\draw[very thick] (x4) to (zk+1);
\end{tikzpicture}
\end{displaymath}

\begin{displaymath}
\tau'=\qquad\begin{tikzpicture}[xscale=1.1,yscale=0.6, baseline =30 ]
\node (yi) at (-1,3) [rectangle,draw=black] {$Y_i$};
\node (x1) at (-1,1) [rectangle,rounded corners,draw=black] {$x_1$};
\node (x2) at (1,0) [rectangle,rounded corners,draw=black] {$x_2$};
\node (z0) at (0,2)  {$\bullet$};
\node (z1) at (1,2)  {$\bullet$};
\node (z2) at (2,2)  {$\bullet$};
\node (z3) at (3,2)  {$\bullet$};
\node (z4) at (3,2)  {$\bullet$};
\node (ppp1) at (4,2)  {$\cdots$};
\node (zi) at (5,2)  {$\bullet$};
\node (zi+1) at (6,2)  {$\bullet$};
\node (ppp2) at (7,2)  {$\cdots$};
\node (zk) at (8,2)  {$\bullet$};
\node (zk+1) at (9,2)  {$\bullet$};
\node (x3) at (10,3) [rectangle,rounded corners,draw=black] {$x_3$};
\node (x4) at (10,1) [rectangle,rounded corners,draw=black] {$x_4$};
\node (y1) at (2,0) [rectangle,draw=black] {$Y_1$};
\node (y2) at (3,0) [rectangle,draw=black] {$Y_2$};
\node (yi-1) at (5,0) [rectangle,draw=black] {$Y_{i-1}$};
\node (yi+1) at (6,0) [rectangle,draw=black] {$Y_{i+1}$};
\node (yk) at (8,0) [rectangle,draw=black] {$Y_k$};
\draw[very thick] (x1) to (z0);
\draw[very thick] (yi) to (z0);
\draw[very thick] (x2) to (z1);
\draw[very thick] (z0) to (z1);
\draw[very thick] (y1) to (z2);
\draw[very thick] (y2) to (z3);
\draw[very thick] (yi-1) to (zi);
\draw[very thick] (yi+1) to (zi+1);
\draw[very thick] (z1) to (z2);
\draw[very thick] (z2) to (z3);
\draw[very thick] (z3) to (z4);
\draw[very thick] (z4) to (ppp1);
\draw[very thick] (ppp1) to (zi);
\draw[very thick] (zi) to (zi+1);
\draw[very thick] (zi+1) to (ppp2);
\draw[very thick] (ppp2) to (zk);
\draw[very thick] (yk) to (zk);
\draw[very thick] (zk) to (zk+1);
\draw[very thick] (x3) to (zk+1);
\draw[very thick] (x4) to (zk+1);
\end{tikzpicture}
\end{displaymath}

\caption{The structure of the trees $\tau$ and $\tau'$ in the proof of Theorem \ref{thm:1}. The labels with round edges refer to leaves $x_1, \ldots,x_4$, whereas the squares refer to subtrees $Y_1,\ldots, Y_k$.}
\label{fig:proof1}
\end{figure}

We just want to prove that the number of leaves in $Y_i$, which we denote by $|Y_i|=y_i$, equals 1 for all $i=1,\dots,k$, because this is equivalent to $\tau$ being a caterpillar. So let us assume there is an $i$ with $y_i\ge 2$. We choose $i$ minimal, i.e. $y_1=y_2=\cdots=y_{i-1}=1$. 
We now construct a new tree $\tau'$ by moving $Y_i$ such that it is now a neighbor of $x_1$, see again Figure \ref{fig:proof1}.   We now consider several kinds of non-trivial splits in $\tau'$:
\begin{enumerate}
\item Splits $\sigma$ which are not on the path from $x_1$ to $x_4$. For those splits $\norm \sigma$ is the same for $\tau'$ as for $\tau$.
\item Splits $\sigma$ on the path from $x_1$ to $x_4$ before the edge splitting $Y_{i-1}$ from $Y_{i+1}$ (i.e. in Figure \ref{fig:proof1} to the left of this split). Enumerating these splits by their distance from $x_1$ in $\tau$ and $\tau'$,  the size of the $l^{th}$ split changes from $\min(l+1,n-l-1)$  to $\min(l+y_i,n-l-y_i)$. 
\item The edge on the path from $x_1$ to $x_4$ which separates $Y_{i-1}$ from $Y_{i+1}$. Here, the size of $\sigma$ in $\tau'$ equals the size of the split splitting $Y_i$ from $Y_{i+1}$ in $\tau$. 
 \item Splits $\sigma$ on the path from $x_1$ to $x_4$ after the edge splitting $Y_{i-1}$ from $Y_{i+1}$, i.e. in Figure \ref{fig:proof1} at the right-hand side. Here, the size of $\sigma$ remains unchanged from $\tau$ to $\tau'$.
\end{enumerate}
Thus we find 
\begin{displaymath}
  \Phi_f(\tau')-\Phi_f(\tau)=\sum_{l=1}^if(\min(l+y_i,n-l-y_i))-f(\min(l+1,n-l-1))>0
\end{displaymath}
by Lemma \ref{lem:turmsum}.  However, this implies that  $\tau$ is not optimal. This is a contradiction, because $\tau$ was chosen to be optimal. So we conclude that the assumption is wrong and $y_i=1$ for all $i=1, \ldots, k$. Therefore, $\tau$ is a caterpillar. 

If $f$ is increasing but not strictly increasing take an arbitrary tree $\tau$ and a caterpillar $\tau_c$. For an arbitrary $\varepsilon>0$ consider the function $\map {f_\varepsilon}{\set{1,\dots,\lfloor n/2\rfloor}}\NRp$,
\begin{displaymath}
f_\varepsilon(k)=f(k)+\varepsilon k \mbox{ for all } k=1,\dots,\lfloor\frac n2\rfloor.
\end{displaymath}
$f_\varepsilon$ is strictly increasing and we obtain from the previous arguments 
\begin{displaymath}
  \Phi_{f_\varepsilon}(\tau_c)\ge\Phi_{f_\varepsilon}(\tau).
\end{displaymath}
For $\varepsilon\downarrow0$  we have $ \Phi_{f_\varepsilon}(\tau_c)\downarrow \Phi_{f}(\tau_c)$ and  $ \Phi_{f_\varepsilon}(\tau)\downarrow \Phi_{f}(\tau)$, which completes the proof.
\end{proof}

Looking for an unconstrained minimum over $\mathcal T_n$ does not make much sense since it is attained at the most unresolved tree $\tau_0$ with no inner splits: $\Sigma^*(\tau)=\emptyset$. Note that such a tree is also often referred to as a star-tree. If we restrict our consideration to $\mathcal{T}^2_n$, instead of considering the minimum of an increasing function, we could  equivalently consider decreasing functions $f$ and look again at the maximum. Anyway, the analysis is much more difficult, which is why we first need to present an important concept needed in this context, namely the so-called split size sequences.

\section{Results for split size sequences and minima}

From now on, we focus on binary trees, and we now introduce so-called split size sequences. In order to obtain these, we first associate to every binary tree $\tau \in \mathcal T^2_n$ the (multi-)set of split sizes $\set{\norm \sigma:\sigma\in \Sigma^*(\tau)}$. However, for ease of notation, it is better to work with  ordered $n-3$ tuples instead of (unordered) sets, which is why we continue with the following definition.

  \begin{definition}
 Let  $\tau \in \mathcal T^2_n$. Then, $\tau$ induces $n-3$ non-trivial splits, i.e. $|\Sigma^*(\tau)|=n-3$. We assume an arbitrary ordering $\sigma_1,\ldots, \sigma_{n-3}$ of these splits and define the $(n-3)$-tuple $\tilde{s}(\tau)$ as follows: 
 
 $$ \tilde{s}(\tau)_i = \norm{\sigma_i} \mbox{ for all } i=1,\ldots,n-3.$$
 We then define the split size sequence $s(\tau)$ as follows: Order the $n-3$ entries of $\tilde{s}(\tau)$ increasingly and call the resulting ordered sequence $s(\tau)$. This is the split size sequence. Moreover, we denote by $\mathcal S_n=\set{s(\tau):\tau \in \mathcal T_n^2}$ the set of all split size sequences on $n$ taxa.
 \end{definition}
 
As an example for the split size sequence, we now consider the caterpillar tree.  Consider again Figure \ref{fig:proof1}. If $\tau$ in this figure is a caterpillar, denoted $\tau_c$, it has $|Y_1|=\dots=|Y_k|=1$. In order to get an (unordered) sequence of all split sizes, we start at one cherry, say $x_1,x_2$ and subsequently consider the splits $\{x_1,x_2,Y_1\}|X\setminus \{x_1,x_2,Y_1\}, \ldots, \{x_1,x_2,Y_1,\ldots,Y_k\}|X\setminus \{x_1,x_2,Y_1,\ldots,Y_k\}$. It is clear that the cherry contributes the value 2 to $\tilde{s}(\tau_c)$, then we get a 3, 4, 5, and so forth. However, as soon as there are fewer leaves on the right-hand side, say in set $B$, than on the left-hand side, say in set $A$, where $\sigma=A|B$, the sequence will continue with $|B|=n-|A|$ instead of $|A|$. In the end, we order the elements of $\tilde{s}(\tau_c)$ increasingly in order to derive $s(\tau_c)$. For example, if the caterpillar has $n=6$ leaves, we get $s(\tau_c)=(2,2,3)$, if $n=7$, we get $s(\tau_c)=(2,2,3,3)$, if $n=8$, we get $s(\tau_c)=(2,2,3,3,4)$, and so forth.

Note that there is an alternative equivalent definition of $s(\tau)$. At first, it  might seem a little less intuitive, but it proves to be useful in the following. Consider the increasing sequence $m(\tau)\in \mathbb{N}^{\set{2,\ldots,\lfloor n/2\rfloor}}$ whose entries are defined as follows: 
   \begin{displaymath}
     m(\tau)_j=\left\lvert\set{\sigma\in \Sigma^*(\tau):\norm \sigma \le j}\right\rvert \quad\mbox{ for all } j=2,\ldots, \lfloor {n}/{2} \rfloor. 
   \end{displaymath}
Then, $s(\tau)\in \mathbb{N}^{n-3}$ is just the left inverse of $m(\tau)$: \begin{displaymath}
  s(\tau)_i=\min\set{j:m(\tau)_j\ge i}.
\end{displaymath}
This means 
\begin{displaymath}
     s(\tau)_i=j\qquad\mbox{~if~} m(\tau)_j\ge i\mbox{~and~} m(\tau)_{j-1} < i.
   \end{displaymath}

The function $m$ will be used in the subsequent proofs.


\par \vspace{0.5cm}Note that it is not so clear how to characterize $\mathcal S_n$, i.e. all possible split size sequences. 
{However, in the following we investigate them a bit further.} Therefore, first of all we  order $\mathcal S_n$ in the pointwise sense. 
So we say that $\tau\in \mathcal T_n^2$ is more balanced than $\tau'\in\mathcal T_n^2$, denoted $\tau\prec\tau'$, if
\begin{displaymath}
  s(\tau)_i\le s(\tau')_i\qquad \mbox{ for all }i=1,\dots,n-3,
  \end{displaymath} or equivalently, 
\begin{displaymath}
  m(\tau)_j\ge m(\tau')_j\qquad \mbox{ for all }j=2,\dots,\lfloor  n/2\rfloor.
\end{displaymath}

This leads to the following theorem.

\begin{theorem} \label{thm:2}Let $\tau, \tau' \in \mathcal T_n^2$. Then, we have $\tau\prec\tau'$ if and only if for all monotonously increasing  $\map f{\set{2,\dots,\lfloor n/2\rfloor}}\NRp$
  \begin{equation}\label{eq:dominance}
    \Phi_f(\tau)\le\Phi_f(\tau').
  \end{equation}
\end{theorem}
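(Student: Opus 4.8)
The plan is to prove both directions of the equivalence, using the two equivalent descriptions of $\prec$ (via $s$ and via $m$) that the paper has just set up.

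The plan is to reduce $\Phi_f$ to a sum over the split size sequence and then exploit the two equivalent descriptions of $\prec$ (via $s$ and via $m$) that the excerpt has just introduced. The starting observation is the identity $\Phi_f(\tau)=\sum_{i=1}^{n-3}f(s(\tau)_i)$, which holds because $s(\tau)$ is nothing but the multiset $\set{\norm\sigma:\sigma\in\Sigma^*(\tau)}$ listed in increasing order, and reordering the summands leaves the sum unchanged.

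With this identity the forward direction is immediate. If $\tau\prec\tau'$ then $s(\tau)_i\le s(\tau')_i$ for every $i$, and since $f$ is increasing we get $f(s(\tau)_i)\le f(s(\tau')_i)$ termwise; summing over $i$ yields $\Phi_f(\tau)\le\Phi_f(\tau')$.

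For the converse I would use the $m$-description together with a family of step test functions. For each threshold $j\in\set{3,\dots,\lfloor n/2\rfloor}$ let $e_j$ be the (weakly) increasing function with $e_j(k)=1$ for $k\ge j$ and $e_j(k)=0$ otherwise; each $e_j$ is an admissible increasing function. A direct count gives $\Phi_{e_j}(\tau)=\abs{\set{\sigma\in\Sigma^*(\tau):\norm\sigma\ge j}}=(n-3)-m(\tau)_{j-1}$. Applying the hypothesis $\Phi_{e_j}(\tau)\le\Phi_{e_j}(\tau')$ therefore turns into $m(\tau)_{j-1}\ge m(\tau')_{j-1}$; letting $j$ range over $3,\dots,\lfloor n/2\rfloor$ recovers $m(\tau)_j\ge m(\tau')_j$ for all $j$ (the top index $\lfloor n/2\rfloor$ is automatic, as $m(\tau)_{\lfloor n/2\rfloor}=n-3$ always), which is precisely $\tau\prec\tau'$.

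The conceptual heart of the argument, and the step I expect to matter most, is this backward direction: one has to recognize that the step functions $e_j$ already detect the entire ordering encoded by $m$. This is underpinned by the layer-cake (Abel-summation) representation $f(k)=f(2)+\sum_{j=3}^{\lfloor n/2\rfloor}(f(j)-f(j-1))\,e_j(k)$, in which every coefficient $f(j)-f(j-1)$ is non-negative precisely because $f$ is increasing. This decomposition makes both implications transparent at once: it re-derives the forward direction from the termwise inequalities $\Phi_{e_j}(\tau)\le\Phi_{e_j}(\tau')$, and it certifies that testing against the $e_j$ alone suffices for the converse. The only care needed is bookkeeping at the boundaries---the constant term $f(2)$ contributes $(n-3)f(2)$ to both sides since $\abs{\Sigma^*(\tau)}=\abs{\Sigma^*(\tau')}=n-3$, and the index shift between $e_j$ (indexed from $3$) and $m$ (indexed from $2$) must be tracked carefully.
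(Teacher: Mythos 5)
Your proof is correct and follows essentially the same route as the paper: the forward direction is the same termwise comparison via $s(\tau)$, and your step functions $e_j$ are exactly the paper's test functions $f_j$, with your translation $\Phi_{e_j}(\tau)=(n-3)-m(\tau)_{j-1}$ handling the index bookkeeping even more carefully than the paper does. Your closing layer-cake decomposition $f(k)=f(2)+\sum_{j=3}^{\lfloor n/2\rfloor}\bigl(f(j)-f(j-1)\bigr)e_j(k)$ is precisely the content of the paper's remark after the theorem, that the $f_j$ generate the extremal rays of the cone of increasing functions.
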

\begin{proof}
  By definition, we have 
$$
    \Phi_f(\tau)=\sum_{i=1}^{n-3}f(s(\tau)_i) \leq 
   \sum_{i=1}^{n-3}f(s(\tau')_i)  = \Phi_f(\tau'),
$$
where the inequality is due to $\tau\prec\tau'$, which implies $s(\tau)_i \leq s(\tau')_i$ for all $i=1,\ldots,n-3$, and $f$ being monotonously increasing. This completes the first part of the proof.

Now assume we have (\ref{eq:dominance}) for all  monotonously increasing  $\map f{\set{2,\dots,\lfloor n/2\rfloor}}\NRp$. Then, in particular (\ref{eq:dominance}) holds for the following family of monotonously increasing functions: 
For each $j\in \set{2,\dots,\lfloor n/2\rfloor}$, define $f_j(k)=\left\{
  \begin{array}[c]{cl}
    1&k\ge j\\
0&k<j
  \end{array}.
\right.$
We obtain 
$$ \sum_{i=1}^{n-3}f_j(s(\tau)_i) =\Phi_{f_j}(\tau)\leq \Phi_{f_j}(\tau')=
   \sum_{i=1}^{n-3}f_j(s(\tau')_i)\quad   \mbox{ for all } j = 2,\ldots, \lfloor n/2\rfloor. $$

By the definition of $f_j$, this implies 
\begin{displaymath}
\left\lvert\set{\sigma\in \Sigma^*(\tau):\norm \sigma \ge j}\right\rvert\ge \left\lvert\set{\sigma\in \Sigma^*(\tau'):\norm \sigma \ge j}\right\rvert.
\end{displaymath}

This immediately leads to  $m(\tau)_j\ge m(\tau')_j$ for all $j=2,\dots,\lfloor {n}/2\rfloor$  and thus $\tau\prec\tau'$. 
\end{proof}

\begin{remark}
From an abstract point of view the result is almost obvious: The functions $f_j(k)=\left\{
  \begin{array}[c]{cl}
    0&k<j\\
1&k\ge j
  \end{array}
\right.$, which we used in the proof of the above theorem, generate the extremal rays of the convex cone $\set{\map f{\set{2,\dots,\lfloor n/2\rfloor}}\NRp:f\nearrow}$ and thus determine the order $\prec$. On the other hand, $\Phi_{f_j}(\tau)=n-3-m_{j-1}(\tau)$. 
\end{remark}
\begin{theorem}
  The only maximal point of $\mathcal S_n$ is derived from the sequence $m(\tau)_j=\min(2j-2,n-3)$ or 
  $s(\tau)_i=\lfloor(i+3)/2\rfloor$ corresponding to caterpillar trees $\tau$.  
\end{theorem}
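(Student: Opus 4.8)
The plan is to reduce the claim to showing that the caterpillar sequence is the \emph{greatest} element of $(\mathcal S_n,\prec)$: once I prove $\tau\prec\tau_c$ for every $\tau\in\mathcal T_n^2$, the sequence $s(\tau_c)$ dominates all of $\mathcal S_n$ pointwise, so it is the unique maximum and hence a fortiori the only maximal point. To establish this domination I would simply chain the two preceding theorems. Fix $\tau\in\mathcal T_n^2$. By Theorem~\ref{thm:1} (applicable since $\tau_c\in\mathcal T_n^2\subseteq\mathcal T_n$), we have $\Phi_f(\tau)\le\Phi_f(\tau_c)$ for \emph{every} monotonically increasing $\map f{\set{2,\dots,\lfloor n/2\rfloor}}\NRp$. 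Since both $\tau$ and $\tau_c$ are binary, Theorem~\ref{thm:2} converts this family of inequalities into exactly $\tau\prec\tau_c$. As $\tau$ was arbitrary, $s(\tau_c)$ tops the order, which settles both the existence and the uniqueness of the maximal point.

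The remaining work is the explicit computation of $m(\tau_c)$ and $s(\tau_c)$. I would list the nontrivial splits of $\tau_c=(((1,2),3),\dots,n)$ as $\set{1,\dots,k}\,|\,\set{k+1,\dots,n}$ for $k=2,\dots,n-2$, so the $n-3$ split sizes are $\min(k,n-k)$. Then
\[
  m(\tau_c)_j=\abs{\set{k\in\set{2,\dots,n-2}:\min(k,n-k)\le j}},
\]
and since $\min(k,n-k)\le j$ is equivalent to $k\le j$ or $k\ge n-j$, the count splits into the two index ranges $\set{2,\dots,j}$ and $\set{n-j,\dots,n-2}$, each of cardinality $j-1$. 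While these ranges are disjoint (i.e.\ $j<n/2$) this gives $m(\tau_c)_j=2j-2$, and once they meet the count saturates at the total $n-3$; thus $m(\tau_c)_j=\min(2j-2,n-3)$. Inverting this staircase via $s(\tau)_i=\min\set{j:m(\tau)_j\ge i}$ yields $s(\tau_c)_i=\lceil i/2\rceil+1=\lfloor(i+3)/2\rfloor$, consistent with the worked examples $(2,2,3)$, $(2,2,3,3)$, $(2,2,3,3,4)$ for $n=6,7,8$.

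I expect the only genuinely fiddly point to be the boundary bookkeeping in this count: for $j$ close to $n/2$ the two index ranges overlap (and, when $n$ is even, the single central split of size $n/2$ must not be double counted), so the parity of $n$ has to be tracked to confirm both the cap value $n-3$ and the exact $\min$-form there. This is routine but deserves a line of care.

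Should a self-contained argument be preferred that avoids invoking Theorem~\ref{thm:1}, I would instead bound the number of ``heavy'' splits of size $\ge t$ with $t=j+1$. The key structural lemma is that the heavy edges span a connected subtree $H$ of $\tau$: for any edge $f$ on the path between two heavy edges, each side of $f$ contains the far cluster of one of them, so both sides of $f$ carry at least $t$ leaves. Writing $n_1,n_2,n_3$ for the counts of degree-$1,2,3$ vertices of $H$ (all of degree $3$ in $\tau$), the tree identity $n_1=n_3+2$ together with the fact that at each leaf of $H$ the two pendant subtrees jointly hold $\ge t$ leaves (otherwise the unique incident heavy edge would be light) yields $n\ge t\,n_1+n_2$; since $t\ge 3$ this forces $\abs{E(H)}=2n_3+n_2+1\le n-2t+1$, i.e.\ $m(\tau)_j\ge\min(2j-2,n-3)$ and hence $\tau\prec\tau_c$ directly. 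In this alternative route the main obstacle is proving the connectivity lemma cleanly and organizing the degree count; the arithmetic afterward is immediate.
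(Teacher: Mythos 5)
Your main argument is exactly the paper's proof: the paper, too, obtains $\tau\prec\tau_c$ for all $\tau\in\mathcal T_n^2$ by feeding the conclusion of Theorem~\ref{thm:1} (valid for every monotonously increasing $f$) into the equivalence of Theorem~\ref{thm:2}, and then observes that a greatest element of $(\mathcal S_n,\prec)$ is necessarily the unique maximal point. Where you go beyond the paper is in the bookkeeping and in the appendix route. The paper disposes of the formulas $m(\tau_c)_j=\min(2j-2,n-3)$ and $s(\tau_c)_i=\lfloor(i+3)/2\rfloor$ with the remark that the caterpillar's split size sequence ``has already been described above'' (it was only described by examples for $n=6,7,8$); your explicit count over the splits $\set{1,\dots,k}\,|\,\set{k+1,\dots,n}$, $k=2,\dots,n-2$, including the overlap of the two index ranges at $j=\lfloor n/2\rfloor$ for even $n$, is correct and fills in what the paper leaves implicit (I checked the inversion: $\lceil i/2\rceil+1=\lfloor(i+3)/2\rfloor$ in both parities). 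Your alternative heavy-edge argument is also sound and genuinely different: the connectivity of the set of edges with $\norm\sigma\ge t$ (any edge on a path between two heavy edges has the far cluster of one of them on each side), the degree identity $n_1=n_3+2$, and the leaf count $n\ge t\,n_1+n_2$ together give $\abs{E(H)}=2n_3+n_2+1\le t\,n_3+n_2+1\le n-2t+1$ for $t\ge3$, which is precisely $m(\tau)_j\ge\min(2j-2,n-3)$. This buys a self-contained, purely counting proof of the domination $\tau\prec\tau_c$ that bypasses the exchange/optimality argument of Theorem~\ref{thm:1} entirely (and even re-proves the inequality half of Theorem~\ref{thm:1} via Theorem~\ref{thm:2}), at the cost of not yielding the strictness statement; note that neither your proof nor the paper's theorem itself addresses whether only caterpillars realize this sequence --- that follows separately from the strict-increase clause of Theorem~\ref{thm:1}, as the paper notes afterwards when asserting uniqueness of the tree shape.
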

\begin{proof}
  That caterpillars give the only maximal point on $\mathcal S_n$  is derived easily from the previous two theorems, as for a caterpillar $\tau_c$ we have  $\Phi_f(\tau_c)\geq \Phi_f(\tau)$ for all $\tau \in \mathcal{T}_n^2$ by Theorem \ref{thm:1}, and thus, by Theorem \ref{thm:2} we conclude $\tau\prec\tau_c$ for all $\tau$. The split size sequence of caterpillars has already been described above.
\end{proof}

Thus, the Pareto maximum of $\mathcal S_n$ is unique.  It even corresponds to  a unique tree shape.
 Unfortunately, $s(\tau)$ does not determine the shape of $\tau$ in general.  This means $\prec$ does not induce a partial order on tree shapes. We illustrate this with the following example. 
 
 \begin{example}
   The basis for this example are the following two trees: 
\begin{displaymath}
\tau=  \begin{tikzpicture}[xscale=0.5,yscale=0.2, baseline =10 ]
\node (A) at (0,4) [rectangle,draw=black] {$T_1$};
\node (B) at (0,0) [rectangle,draw=black] {$T_1$};
\node (C) at (2,2)  {$\bullet$};
\node (D) at (4,2)  {$\bullet$};
\node (E) at (6,0) [rectangle,draw=black] {$T_2$};
\node (F) at (6,4) [rectangle,draw=black] {$T_2$};
\draw[very thick] (A) to (C);
\draw[very thick] (B) to (C);
\draw[very thick] (C) to (D);
\draw[very thick] (D) to (E);
\draw[very thick] (D) to (F);
\end{tikzpicture}\qquad\mbox{ and }\qquad \tau'=\begin{tikzpicture}[xscale=0.5,yscale=0.2, baseline =10 ]
\node (A) at (0,4) [rectangle,draw=black] {$T_1$};
\node (B) at (0,0) [rectangle,draw=black] {$T_2$};
\node (C) at (2,2)  {$\bullet$};
\node (D) at (4,2)  {$\bullet$};
\node (E) at (6,0) [rectangle,draw=black] {$T_1$};
\node (F) at (6,4) [rectangle,draw=black] {$T_2$};
\draw[very thick] (A) to (C);
\draw[very thick] (B) to (C);
\draw[very thick] (C) to (D);
\draw[very thick] (D) to (E);
\draw[very thick] (D) to (F);
\end{tikzpicture}
\end{displaymath}
for two (rooted) subtrees $T_1$, $T_2$ on the same number of leaves with different shape. $\tau$ and $\tau'$ are just one NNI-move apart, which means that only two subtrees have to swap their position across one inner edge. Note that $\tau$ and $\tau'$ show the same split size sequences: First of all, both trees contain the splits induced by edges inside $T_1$ and $T_2$, as well as the four splits separating either of the two copies of $T_1$ or either of the two copies of $T_2$ from the rest of the tree. However, $\tau$ and $\tau'$ are different because they differ in the split induced by the central edge: this split separates the two copies of $T_1$ from the two copies of $T_2$ in $\tau$ and thus differs from the edge separating one copy of both $T_1$ and $T_2$ from another copy of $T_1$ and $T_2$. But even for these two splits the split size is the same, namely $n/2$, because $T_1$ and $T_2$ have the same number of leaves. Clearly, $\tau$ and $\tau'$ differ in shape if $T_1,T_2$ do so. The simplest situation is for $|T_1|=|T_2|=4$,
 \begin{displaymath}
T_1=  \begin{tikzpicture}[xscale=0.5,yscale=0.8, baseline =30 ]
\node (A) at (0,0) [rectangle,rounded corners,draw=black] {$1$};
\node (B) at (2,0) [rectangle,rounded corners,draw=black] {$2$};
\node (C) at (1,1)  {$\bullet$};
\node (D) at (2,2)  {$\bullet$};
\node (E) at (3,3)  {$\bullet$};
\node (F) at (4,0) [rectangle,rounded corners,draw=black] {$3$};
\node (G) at (6,0) [rectangle,rounded corners,draw=black] {$4$};
\draw[very thick] (A) to (C);
\draw[very thick] (B) to (C);
\draw[very thick] (C) to (D);
\draw[very thick] (D) to (E);
\draw[very thick] (F) to (D);
\draw[very thick] (G) to (E);
\end{tikzpicture}\qquad\mbox{ and }\qquad T_2=\begin{tikzpicture}[xscale=0.5,yscale=0.8, baseline =30 ]
\node (A) at (0,0) [rectangle,rounded corners,draw=black] {$1$};
\node (B) at (2,0) [rectangle,rounded corners,draw=black] {$2$};
\node (C) at (1,1)  {$\bullet$};
\node (D) at (3,3)  {$\bullet$};
\node (E) at (5,1)  {$\bullet$};
\node (F) at (4,0) [rectangle,rounded corners,draw=black] {$3$};
\node (G) at (6,0) [rectangle,rounded corners,draw=black] {$4$};
\draw[very thick] (A) to (C);
\draw[very thick] (B) to (C);
\draw[very thick] (C) to (D);
\draw[very thick] (D) to (E);
\draw[very thick] (F) to (E);
\draw[very thick] (G) to (E);
\end{tikzpicture}.\end{displaymath}  

Thus we obtain
\begin{displaymath}
  \tau=\begin{tikzpicture}[xscale=1,yscale=1,baseline=0]
\node (A14b) at (-4,4) [rectangle,rounded corners,draw=black] {$4$};
\node (A13b) at (-2,4) [rectangle,rounded corners,draw=black] {$3$};
\node (A12b) at (-1,3) [rectangle,rounded corners,draw=black] {$2$};
\node (A1b) at (0,2) [rectangle,rounded corners,draw=black] {$1$};
\node (A) at (0,0)  {$\bullet$};
\node (A1) at (-1,1)  {$\bullet$};
\node (A2) at (-1,-1)  {$\bullet$};
\node (A12) at (-2,2)  {$\bullet$};
\node (A13) at (-3,3)  {$\bullet$};
\node (B) at (2,0)  {$\bullet$};
\draw[very thick] (A) to (B);
\draw[very thick] (A) to (A1);
\draw[very thick] (A) to (A2);
\draw[very thick] (A1) to (A1b);
\draw[very thick] (A1) to (A12);
\draw[very thick] (A12) to (A12b);
\draw[very thick] (A12) to (A13);
\draw[very thick] (A13) to (A13b);
\draw[very thick] (A13) to (A14b);
\node (A24b) at (-4,-4) [rectangle,rounded corners,draw=black] {$8$};
\node (A23b) at (-2,-4) [rectangle,rounded corners,draw=black] {$7$};
\node (A22b) at (-1,-3) [rectangle,rounded corners,draw=black] {$6$};
\node (A21b) at (0,-2) [rectangle,rounded corners,draw=black] {$5$};
\node (A22) at (-2,-2)  {$\bullet$};
\node (A23) at (-3,-3)  {$\bullet$};
\draw[very thick] (A2) to (A21b);
\draw[very thick] (A2) to (A22);
\draw[very thick] (A22) to (A22b);
\draw[very thick] (A22) to (A23);
\draw[very thick] (A23) to (A23b);
\draw[very thick] (A23) to (A24b);
\node (B1) at (3,1)  {$\bullet$};
\node (B2) at (3,-1)  {$\bullet$};
\node (B11) at (2,2)  {$\bullet$};
\node (B12) at (4,2)  {$\bullet$};
\node (B11b) at (1,3) [rectangle,rounded corners,draw=black] {$12$};
\node (B11c) at (3,3) [rectangle,rounded corners,draw=black] {$11$};
\node (B12b) at (5,3) [rectangle,rounded corners,draw=black] {$10$};
\node (B12c) at (5,1) [rectangle,rounded corners,draw=black] {$9$};
\draw[very thick] (B) to (B1);
\draw[very thick] (B) to (B2);
\draw[very thick] (B1) to (B11);
\draw[very thick] (B1) to (B12);
\draw[very thick] (B11) to (B11b);
\draw[very thick] (B11) to (B11c);
\draw[very thick] (B12) to (B12b);
\draw[very thick] (B12) to (B12c);
\node (B21) at (2,-2)  {$\bullet$};
\node (B22) at (4,-2)  {$\bullet$};
\node (B21b) at (1,-3) [rectangle,rounded corners,draw=black] {$16$};
\node (B21c) at (3,-3) [rectangle,rounded corners,draw=black] {$15$};
\node (B22b) at (5,-3) [rectangle,rounded corners,draw=black] {$14$};
\node (B22c) at (5,-1) [rectangle,rounded corners,draw=black] {$13$};
\draw[very thick] (B2) to (B21);
\draw[very thick] (B2) to (B22);
\draw[very thick] (B21) to (B21b);
\draw[very thick] (B21) to (B21c);
\draw[very thick] (B22) to (B22b);
\draw[very thick] (B22) to (B22c);

\end{tikzpicture}
\end{displaymath}
and 
\begin{displaymath}
  \tau'=\begin{tikzpicture}[xscale=1,yscale=1,baseline=0]
\node (A14b) at (-4,4) [rectangle,rounded corners,draw=black] {$4$};
\node (A13b) at (-2,4) [rectangle,rounded corners,draw=black] {$3$};
\node (A12b) at (-1,3) [rectangle,rounded corners,draw=black] {$2$};
\node (A1b) at (0,2) [rectangle,rounded corners,draw=black] {$1$};
\node (A) at (0,0)  {$\bullet$};
\node (A1) at (-1,1)  {$\bullet$};
\node (A2) at (3,-1)  {$\bullet$};
\node (A12) at (-2,2)  {$\bullet$};
\node (A13) at (-3,3)  {$\bullet$};
\node (A24b) at (6,-4) [rectangle,rounded corners,draw=black] {$8$};
\node (A23b) at (4,-4) [rectangle,rounded corners,draw=black] {$7$};
\node (A22b) at (3,-3) [rectangle,rounded corners,draw=black] {$6$};
\node (A21b) at (2,-2) [rectangle,rounded corners,draw=black] {$5$};
\node (A22) at (4,-2)  {$\bullet$};
\node (A23) at (5,-3)  {$\bullet$};
\node (B) at (2,0)  {$\bullet$};
\node (B1) at (3,1)  {$\bullet$};
\node (B2) at (-1,-1)  {$\bullet$};
\node (B11) at (2,2)  {$\bullet$};
\node (B12) at (4,2)  {$\bullet$};
\node (B11b) at (1,3) [rectangle,rounded corners,draw=black] {$12$};
\node (B11c) at (3,3) [rectangle,rounded corners,draw=black] {$11$};
\node (B12b) at (5,3) [rectangle,rounded corners,draw=black] {$10$};
\node (B12c) at (5,1) [rectangle,rounded corners,draw=black] {$9$};
\node (B21) at (0,-2)  {$\bullet$};
\node (B22) at (-2,-2)  {$\bullet$};
\node (B21b) at (1,-3) [rectangle,rounded corners,draw=black] {$16$};
\node (B21c) at (-1,-3) [rectangle,rounded corners,draw=black] {$15$};
\node (B22b) at (-3,-3) [rectangle,rounded corners,draw=black] {$14$};
\node (B22c) at (-3,-1) [rectangle,rounded corners,draw=black] {$13$};
\draw[very thick] (A) to (B);
\draw[very thick] (A) to (A1);
\draw[very thick] (A) to (B2);
\draw[very thick] (A1) to (A1b);
\draw[very thick] (A1) to (A12);
\draw[very thick] (A12) to (A12b);
\draw[very thick] (A12) to (A13);
\draw[very thick] (A13) to (A13b);
\draw[very thick] (A13) to (A14b);
\draw[very thick] (A2) to (A21b);
\draw[very thick] (A2) to (A22);
\draw[very thick] (A22) to (A22b);
\draw[very thick] (A22) to (A23);
\draw[very thick] (A23) to (A23b);
\draw[very thick] (A23) to (A24b);
\draw[very thick] (B) to (B1);
\draw[very thick] (B) to (A2);
\draw[very thick] (B1) to (B11);
\draw[very thick] (B1) to (B12);
\draw[very thick] (B11) to (B11b);
\draw[very thick] (B11) to (B11c);
\draw[very thick] (B12) to (B12b);
\draw[very thick] (B12) to (B12c);
\draw[very thick] (B2) to (B21);
\draw[very thick] (B2) to (B22);
\draw[very thick] (B21) to (B21b);
\draw[very thick] (B21) to (B21c);
\draw[very thick] (B22) to (B22b);
\draw[very thick] (B22) to (B22c);
\end{tikzpicture}
\end{displaymath}
which display both the split size sequence
\begin{displaymath}
s(\tau)=s(\tau')=(2,2,2,2,2,2,3,3,4,4,4,4,8). 
\end{displaymath}
So $\tau$ and $\tau'$ have the same split size sequence but differ in shape. By exhaustive search we found different (more complex) tree shapes with the same split size sequence already for  $n=11$. For  $n\le 10$, all binary trees with the same split size sequence have the same shape.
 \end{example}
 
So far we have analyzed the maximum of $\mathcal S_n$ and have seen that it is achieved uniquely by the caterpillar tree. However, we also want to find the most balanced tree shape(s), i.e. Pareto minima of $\mathcal S_n$. Surprisingly, they are not unique, as we will demonstrate now.
 
\begin{example}\label{ex:min8}
  Let $n=8$ and consider the  trees $\tau,\tau'\in \mathcal T_8^2$ with split size sequences 
  \begin{eqnarray*}
    s(\tau)&=&(2,2,2,2,4)\\
s(\tau')&=&(2,2,2,3,3)
  \end{eqnarray*}
which are depicted in Figure \ref{fig:2minima}. One could think of $\tau$ being mostly balanced with respect to the edge $1234|5678$ and $\tau'$ being mostly balanced with respect to the vertex $123|45|678$. 
For the monotonously increasing function $f$ with $$f(k)=\left\{
  \begin{array}[c]{cl}
    1&k\ge4\\
0&\mbox{otherwise}
  \end{array}
\right., $$ we derive $\Phi_{f}(\tau)=1>0=\Phi_f(\tau')$. On the other hand, for the monotonously increasing function $g$ with $$g(k)=\left\{
  \begin{array}[c]{cl}
    1&k\ge3\\
0&\mbox{otherwise}
  \end{array}
\right., $$ we derive $\Phi_{g}(\tau)=1<2=\Phi_g(\tau').$

\begin{figure}[htbp]
  \centering
  \begin{displaymath}
\tau=\qquad\begin{tikzpicture}[xscale=1,yscale=1, baseline =50 ]
\node (1) at (2,4) [rectangle,rounded corners,draw=black] {$1$};
\node (2) at (1,3) [rectangle,rounded corners,draw=black] {$2$};
\node (3) at (1,1) [rectangle,rounded corners,draw=black] {$3$};
\node (4) at (2,0) [rectangle,rounded corners,draw=black] {$4$};
\node (12) at (2,3)  {$\bullet$};
\node (34) at (2,1)  {$\bullet$};
\node (1234) at (3,2)  {$\bullet$};
\node (5678) at (4,2)  {$\bullet$};
\node (56) at (5,1)  {$\bullet$};
\node (78) at (5,3)  {$\bullet$};
\node (5) at (5,0) [rectangle,rounded corners,draw=black] {$5$};
\node (6) at (6,1) [rectangle,rounded corners,draw=black] {$6$};
\node (7) at (6,3) [rectangle,rounded corners,draw=black] {$7$};
\node (8) at (5,4) [rectangle,rounded corners,draw=black] {$8$};
\draw[very thick] (1) to (12);
\draw[very thick] (2) to (12);
\draw[very thick] (3) to (34);
\draw[very thick] (4) to (34);
\draw[very thick] (12) to (1234);
\draw[very thick] (34) to (1234);
\draw[very thick] (1234) to (5678);
\draw[very thick] (5678) to (56);
\draw[very thick] (5678) to (78);
\draw[very thick] (56) to (5);
\draw[very thick] (56) to (6);
\draw[very thick] (78) to (7);
\draw[very thick] (78) to (8);
\end{tikzpicture}
\end{displaymath}
\begin{displaymath}
\tau'=\qquad\begin{tikzpicture}[xscale=1,yscale=1, baseline =50 ]
\node (1) at (2,4) [rectangle,rounded corners,draw=black] {$1$};
\node (2) at (1,3) [rectangle,rounded corners,draw=black] {$2$};
\node (12) at (2,3)  {$\bullet$};
\node (3) at (2,1) [rectangle,rounded corners,draw=black] {$3$};
\node (123) at (3,2)  {$\bullet$};
\node (12345) at (4,2)  {$\bullet$};
\node (45) at (5,1)  {$\bullet$};
\node (678) at (5,3)  {$\bullet$};
\node (67) at (6,3)  {$\bullet$};
\node (4) at (5,0) [rectangle,rounded corners,draw=black] {$4$};
\node (5) at (6,1) [rectangle,rounded corners,draw=black] {$5$};
\node (8) at (5,4) [rectangle,rounded corners,draw=black] {$8$};
\node (6) at (7,2) [rectangle,rounded corners,draw=black] {$6$};
\node (7) at (7,4) [rectangle,rounded corners,draw=black] {$7$};
\draw[very thick] (1) to (12);
\draw[very thick] (2) to (12);
\draw[very thick] (3) to (123);
\draw[very thick] (12) to (123);
\draw[very thick] (123) to (12345);
\draw[very thick] (12345) to (678);
\draw[very thick] (12345) to (45);
\draw[very thick] (678) to (8);
\draw[very thick] (678) to (67);
\draw[very thick] (45) to (4);
\draw[very thick] (45) to (5);
\draw[very thick] (67) to (6);
\draw[very thick] (67) to (7);
\end{tikzpicture}
\end{displaymath}
\caption{The structure of the trees $\tau$ and $\tau'$ with different Pareto minima of $\mathcal S_8$.}
  \label{fig:2minima}
\end{figure}
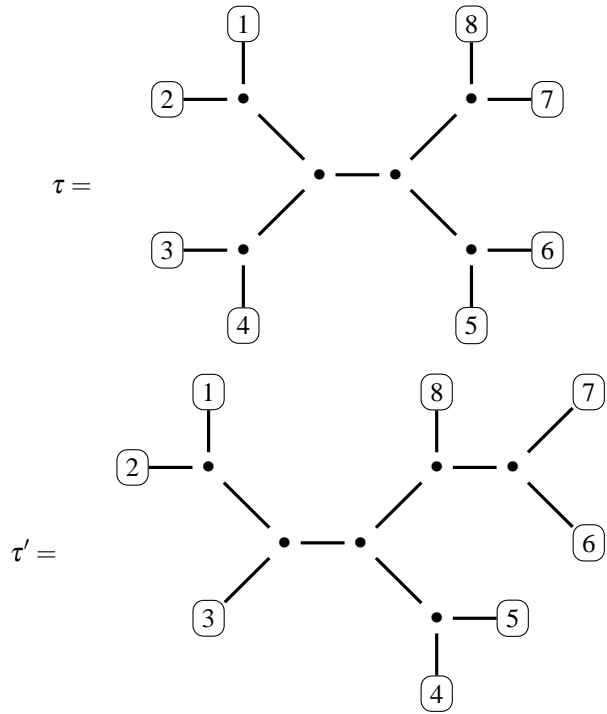

So depending on the underlying monotonously increasing function, either tree can be better than the other one. However, that the split size sequences induced by these trees are both Pareto minimal, i.e. that there is no tree which is more balanced than these two trees, can be seen when considering the hypothetical split size sequence $(2,2,2,2,3)$. This sequence would clearly dominate the above sequences, but $(2,2,2,2,3)\notin\mathcal{S}_8$: There are only eight leaves, so if there are four splits of size 2, this implies that all leaves should be in cherries. However, this contradicts the existence of a split of size 3.
\end{example}

As we have seen above, Pareto minima are not necessarily unique. The above example employed eight leaves. The following example shows that for $n=12$, there are even more Pareto minima, namely three.
 
\begin{example}
  There are more than 2 Pareto minima for large $n$. E.g., for $n=12$,
  \begin{equation}\label{eq:12splits}
    \begin{array}[c]{l}
      (2,    2,    2,    2,    2,    2,4,4,4)\\
   (2,    2,    2,    2,    2,3,3,4,5)\\
 (2,    2,    2,    2,3,3,3,3,6)
    \end{array}
  \end{equation}
are all minimal with respect to $\prec$.

We explicitly calculated the number of Pareto minima of $\mathcal{S}_n$ up to $n=22$. Surprisingly, the number of Pareto minima is not monotonous, see Figure \ref{fig:NoSn}. The sequence of the numbers of Pareto minima is not contained in the online encyclopedia of integer sequences \cite{OEIS}, which means that it seems to be unrelated to problems inducing other known integer sequences.  
\begin{figure}
  \centering
  
\includegraphics[angle=-90,width=0.7\textwidth]{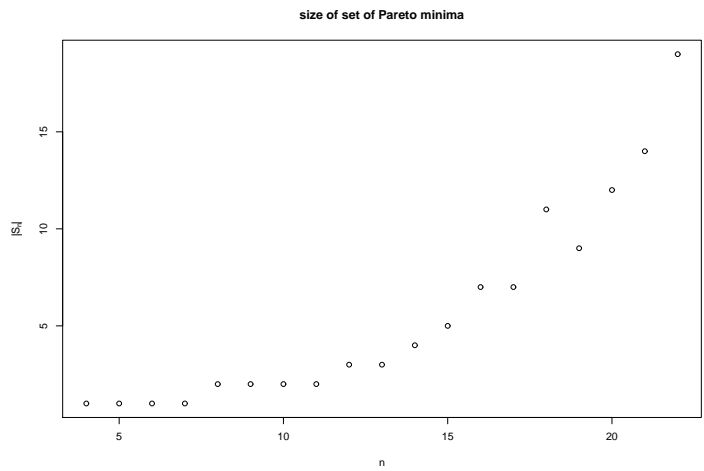}
  \caption{Number of Pareto minima of $\mathcal S_n$ for  $n=4,5,\dots,22$.}
  \label{fig:NoSn}
\end{figure}

\end{example}

Next we want to focus on NNI-moves and the neighborhoods they induce. 

\begin{definition}
  Let $f$ be increasing. Then $\tau\in\mathcal T_n$ is called an NNI-local minimum of $\Phi_f$ if for $\tau'$ in the 1-NNI-neighborhood of $\tau$, i.e. in the set of trees which can be reached from $\tau$ by performing one NNI-move, we have $\Phi_f(\tau')\ge\Phi_f(\tau)$.
\end{definition}
\begin{lemma}
  Let $\tau\in \mathcal T_n$ be a tree with its leaves labelled by $X$, $|X|=n$, and $f$ be strictly increasing. Then $\tau\in\mathcal T_n$ is an NNI-local minimum of $\Phi_f$ if and only if for all splits $\sigma$ such that $\sigma=X_1 \cup X_2 | X_3 \cup X_4$ with $X_i \cap X_j=\emptyset$ for all $i,j \in \{1,2,3,4\},$ $i \neq j$, and $\bigcup\limits_{i=1}^4 X_i = X$, \begin{displaymath}
\tau=  \begin{tikzpicture}[xscale=0.5,yscale=0.2, baseline =10 ]
\node (A) at (0,4) [rectangle,draw=black] {$X_1$};
\node (B) at (0,0) [rectangle,draw=black] {$X_2$};
\node (C) at (2,2)  {$\bullet$};
\node (D) at (4,2)  {$\bullet$};
\node (E) at (6,0) [rectangle,draw=black] {$X_3$};
\node (F) at (6,4) [rectangle,draw=black] {$X_4$};
\draw[very thick] (A) to (C);
\draw[very thick] (B) to (C);
\draw[very thick] (C) to (D);
\draw[very thick] (D) to (E);
\draw[very thick] (D) to (F);
\end{tikzpicture}
\end{displaymath}
we have
\begin{equation} 
  \min(x_1+x_2,x_3+x_4)\le \min \left\{
    \begin{array}[c]{c}
      \min(x_1+x_3,x_2+x_4),\\
\min(x_1+x_4,x_2+x_3)
    \end{array}
\right\}\label{eq:locNNI}
\end{equation}
\end{lemma}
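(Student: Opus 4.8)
The plan is to exploit the fact that an NNI-move is extremely local: performed across a fixed inner edge $e$, it alters exactly one split of the tree. First I would set up the configuration of the displayed figure, fixing an inner edge $e$ whose removal exposes the four pendant subtrees $X_1,X_2$ (on the side of the vertex $C$) and $X_3,X_4$ (on the side of $D$). Writing $x_i=|X_i|$, the edge $e$ induces the split $X_1\cup X_2\mid X_3\cup X_4$ of size $s=\min(x_1+x_2,x_3+x_4)$. There are exactly two NNI-neighbors of $\tau$ across $e$, obtained by interchanging $X_2$ with $X_3$, respectively $X_2$ with $X_4$; their central edges induce the splits $X_1\cup X_3\mid X_2\cup X_4$ and $X_1\cup X_4\mid X_2\cup X_3$, of sizes $\min(x_1+x_3,x_2+x_4)$ and $\min(x_1+x_4,x_2+x_3)$, i.e. precisely the two quantities appearing on the right-hand side of (\ref{eq:locNNI}).

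The crucial observation, which I would establish first, is that every split of $\tau$ other than the one induced by $e$ survives the NNI-move unchanged. Indeed, the internal splits of each $X_i$ are untouched, and each edge joining a subtree $X_i$ to $C$ or $D$ induces the bipartition $X_i\mid X\setminus X_i$ both before and after the move, since $X_i$ is relocated as a whole; whether or not this bipartition is non-trivial, it contributes identically to $\Phi_f(\tau)$ and to $\Phi_f(\tau')$. Consequently the only summand of $\Phi_f$ that can change is the one coming from $e$, so that for an NNI-neighbor $\tau'$ with central split size $s'$ one has
$$\Phi_f(\tau')-\Phi_f(\tau)=f(s')-f(s).$$
Along the way one checks that $s,s'\ge 2$ (each $x_i\ge1$, so both sides of every central split have size at least $2$), so that $f$ is always evaluated on its domain and the central split is genuinely non-trivial.

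From here both implications are immediate consequences of the monotonicity of $f$. For the ``if'' direction, assume (\ref{eq:locNNI}) holds at every inner edge; then for any NNI-neighbor $\tau'$ its central split size $s'$ is one of the two quantities on the right of (\ref{eq:locNNI}), hence $s'\ge s$, so $f(s')\ge f(s)$ and $\Phi_f(\tau')\ge\Phi_f(\tau)$, proving that $\tau$ is an NNI-local minimum. For the ``only if'' direction, assume $\tau$ is an NNI-local minimum and fix an inner edge; applying the definition to each of the two NNI-neighbors gives $f(s'_j)\ge f(s)$ for $j=1,2$, and here the \emph{strict} monotonicity of $f$ is exactly what lets me conclude $s'_j\ge s$ (were $s'_j<s$, strictness would force $f(s'_j)<f(s)$). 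Since both neighbor sizes then dominate $s=\min(x_1+x_2,x_3+x_4)$, their minimum does too, which is inequality (\ref{eq:locNNI}).

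The only point requiring genuine care — and hence the main obstacle — is the locality claim that no split other than the central one is affected by the move. I would make this rigorous by examining each type of edge of $\tau$ separately: the edges internal to the $X_i$, the pendant edges joining $X_i$ to $C$ or $D$, and the edge $e$ itself, confirming that only $e$ changes its bipartition. Everything that follows is then a one-line consequence of the (strict) monotonicity of $f$.
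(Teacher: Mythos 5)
Your proposal is correct and follows essentially the same route as the paper's own proof: both rest on the observation that an NNI-move across an inner edge changes exactly one split (the central one), reducing the comparison of $\Phi_f(\tau')$ and $\Phi_f(\tau)$ to $f(s')$ versus $f(s)$, and then invoke the strict monotonicity of $f$ for each of the two neighbors across every inner edge. If anything, you spell out the ``only if'' direction (where strictness is genuinely needed to pass from $f(s')\ge f(s)$ back to $s'\ge s$) more explicitly than the paper does, which states that step rather tersely.
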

\begin{proof}
   $\tau'$ in the 1-NNI-neighborhood of $\tau$ contains without loss of generality  the edge $X_1 \cup X_3 | X_2 \cup X_4$ as depicted below.
\begin{displaymath}
\tau'=  \begin{tikzpicture}[xscale=0.5,yscale=0.2, baseline =10 ]
\node (A) at (0,4) [rectangle,draw=black] {$X_1$};
\node (B) at (0,0) [rectangle,draw=black] {$X_3$};
\node (C) at (2,2)  {$\bullet$};
\node (D) at (4,2)  {$\bullet$};
\node (E) at (6,0) [rectangle,draw=black] {$X_2$};
\node (F) at (6,4) [rectangle,draw=black] {$X_4$};
\draw[very thick] (A) to (C);
\draw[very thick] (B) to (C);
\draw[very thick] (C) to (D);
\draw[very thick] (D) to (E);
\draw[very thick] (D) to (F);
\end{tikzpicture}
\end{displaymath}
Note that $\tau$ and $\tau'$ differ only in this edge, i.e. all other splits are the same in $\Sigma(\tau)$ and $\Sigma(\tau')$. Thus, in order to conclude that $\tau$ is an NNI-local minimum, we need
\begin{displaymath}
  f(\min(x_1+x_2,x_3+x_4))\le f(
      \min(x_1+x_3,x_2+x_4)).
\end{displaymath}
Since $f$ is strictly increasing, this follows directly from $\min(x_1+x_2,x_3+x_4)\le \min(x_1+x_3,x_2+x_4)$. Using an analogous argument for the other possible tree $\tau'' = X_1 \cup X_4 | X_2 \cup X_3$ leads to the desired result.
\end{proof}

\begin{remark}
Note that (\ref{eq:locNNI}) is just NNI-local minimality with respect to the function $f(k)=k$.  
\end{remark}

We now consider the role of NNI concerning the Pareto minima discussed earlier.

\begin{lemma}
  Every Pareto minimum is an NNI-local minimum.
\end{lemma}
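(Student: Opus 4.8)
The plan is to prove the contrapositive: if $\tau\in\mathcal T_n^2$ is \emph{not} an NNI-local minimum, then its split size sequence $s(\tau)$ is not Pareto minimal in $\mathcal S_n$. Throughout I fix an arbitrary strictly increasing $f$; by the preceding characterization lemma together with the remark following it, being an NNI-local minimum is independent of this choice, being equivalent to condition (\ref{eq:locNNI}) holding at every inner edge.

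First I would unpack the failure of NNI-local minimality. By the characterization lemma, if $\tau$ is not an NNI-local minimum then (\ref{eq:locNNI}) fails at some inner edge, i.e. there is a decomposition $X=X_1\cup X_2\cup X_3\cup X_4$ with the displayed local structure such that
\[\min(x_1+x_2,x_3+x_4)>\min\set{\min(x_1+x_3,x_2+x_4),\min(x_1+x_4,x_2+x_3)}.\]
Choosing whichever of the two NNI-moves realizes the smaller right-hand side produces a neighbor $\tau'\in\mathcal T_n^2$ (binary, since NNI-moves preserve binarity) whose split at the central edge is \emph{strictly} smaller than the corresponding split of $\tau$.

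Next I would invoke the structural fact already noted in the proof of the characterization lemma, that an NNI-move alters exactly one split: every $\sigma\in\Sigma^*(\tau)$ other than the central one also lies in $\Sigma^*(\tau')$ with the same size, and only the central split has its size strictly decreased. Hence the multiset $\set{\norm\sigma:\sigma\in\Sigma^*(\tau')}$ arises from $\set{\norm\sigma:\sigma\in\Sigma^*(\tau)}$ by strictly lowering a single entry, say from $v$ to some $v'<v$. Passing to the counting function $m$, lowering one split size can only increase each count, so $m(\tau')_j\ge m(\tau)_j$ for all $j$, with strict inequality for every threshold $v'\le j<v$ (the decreased split is now counted by $m(\tau')_j$ but not by $m(\tau)_j$). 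By the left-inverse relationship between $m$ and $s$, this is precisely $s(\tau')_i\le s(\tau)_i$ for all $i$ with strict inequality somewhere, i.e. $\tau'\prec\tau$ and $s(\tau')\ne s(\tau)$.

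Therefore $s(\tau)$ is strictly dominated by the distinct sequence $s(\tau')\in\mathcal S_n$ and cannot be a Pareto minimum; the contrapositive yields the claim. The only step demanding genuine care is the monotonicity of the ordered split size sequence under decreasing a single entry, and I expect this to be the main (though modest) obstacle: reasoning directly about the reordering of the tuple is awkward, whereas the $m$-$s$ duality introduced earlier makes the argument immediate, since decreasing a split size manifestly only increases the counts $m(\tau)_j$.
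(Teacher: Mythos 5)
Your proof is correct and is essentially the paper's own argument in expanded, contrapositive form: the paper's two-line proof likewise rests on the observation that an NNI-move changes at most one split size, so a violation of (\ref{eq:locNNI}) would strictly lower a single entry and produce a sequence in $\mathcal S_n$ strictly dominating $s(\tau)$, contradicting Pareto minimality. Your extra care with the $m$--$s$ duality simply fills in details the paper leaves implicit.
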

\begin{proof}
  Any NNI-move changes at most  one split size. Thus, Inequality (\ref{eq:locNNI}) follows immediately from Pareto minimality. 
\end{proof}

The following lemma is a direct conclusion of the above findings.

\begin{lemma}
  Let $f$ be strictly monotonously increasing. Then for any minimal point $\tau\in \mathcal T_n$ of $\Phi_f$ the split size sequence $s(\tau)$ is a Pareto minimum of $\mathcal S_n$. 
\end{lemma}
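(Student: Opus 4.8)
The plan is to argue by contradiction, reducing the claim to the sum representation of $\Phi_f$ together with the strictness of $f$. Throughout I read ``minimal point $\tau$ of $\Phi_f$'' as a global minimiser of $\Phi_f$ over the binary trees $\mathcal T^2_n$, since this is the only setting in which $s(\tau)$ is defined (a minimum over all of $\mathcal T_n$ would be the star tree with $\Sigma^*=\emptyset$, for which no split size sequence exists). With this reading the statement chains onto the preceding lemma: a global $\Phi_f$-minimiser yields a Pareto minimum of $\mathcal S_n$, which in turn is an NNI-local minimum.

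First I would recall the identity $\Phi_f(\tau)=\sum_{i=1}^{n-3}f(s(\tau)_i)$ already used in the proof of Theorem \ref{thm:2}; this is exactly what lets me pass from the ordering $\prec$ on split size sequences to the scalar functional. Now suppose, for contradiction, that $s(\tau)$ is \emph{not} Pareto minimal in $\mathcal S_n$. By definition of Pareto minimality there is a tree $\tau'\in\mathcal T^2_n$ with $s(\tau')\prec s(\tau)$ and $s(\tau')\ne s(\tau)$, i.e. $s(\tau')_i\le s(\tau)_i$ for every $i=1,\dots,n-3$ and $s(\tau')_{i_0}<s(\tau)_{i_0}$ for at least one index $i_0$.

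The key step is to upgrade this pointwise domination to a strict inequality of the functionals. Since every split size lies in $\set{2,\dots,\lfloor n/2\rfloor}$, the domain of $f$, each term is comparable; because $f$ is strictly increasing we obtain $f(s(\tau')_i)\le f(s(\tau)_i)$ for all $i$ and, crucially, $f(s(\tau')_{i_0})<f(s(\tau)_{i_0})$ at the distinguished index. Summing over $i$ therefore yields $\Phi_f(\tau')<\Phi_f(\tau)$, contradicting the assumed minimality of $\tau$. Hence $s(\tau)$ must be Pareto minimal.

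There is no genuine obstacle here; the content is entirely in the interplay between strict monotonicity and the sum representation. The one point that deserves care is the strictness: a merely increasing $f$ would only give the non-strict $\Phi_f(\tau')\le\Phi_f(\tau)$ and the argument would collapse, which is precisely why the hypothesis demands $f$ strictly increasing. It is also worth noting explicitly that $\tau'$ exists as an actual binary tree because $\mathcal S_n$ is by definition the set of split size sequences realised by trees in $\mathcal T^2_n$, so the dominating sequence is automatically $\Phi_f$-evaluable.
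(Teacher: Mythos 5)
Your proposal is correct and matches the paper's intent: the paper states this lemma without proof, as ``a direct conclusion of the above findings,'' and your contradiction argument --- a dominating sequence $s(\tau')\ne s(\tau)$ realised by some $\tau'\in\mathcal T^2_n$ forces $\Phi_f(\tau')<\Phi_f(\tau)$ via the sum representation and strict monotonicity of $f$ --- is exactly the intended filling-in. Your reading of ``minimal point'' as a minimiser over $\mathcal T^2_n$ is also the right one, consistent with the paper's own remark that the unconstrained minimum over $\mathcal T_n$ is the star tree, for which $s(\tau)$ is undefined.
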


According to  \cite{SWW11}, we call a tree $\tau$ semi-regular, if for all representations 
\begin{equation}
\tau=\qquad\begin{tikzpicture}[xscale=1,yscale=0.6, baseline =30 ]
\node (x1) at (0,3) [rectangle,draw=black] {$X_1$};
\node (x2) at (0,1) [rectangle,draw=black] {$X_2$};
\node (z1) at (1,2)  {$\bullet$};
\node (z2) at (2,2)  {$\bullet$};
\node (z3) at (3,2)  {$\bullet$};
\node (ppp) at (4,2)  {$\cdots$};
\node (zk) at (5,2)  {$\bullet$};
\node (zk+1) at (6,2)  {$\bullet$};
\node (x3) at (7,3) [rectangle,draw=black] {$X_3$};
\node (x4) at (7,1) [rectangle,draw=black] {$X_4$};
\node (y1) at (2,0) [rectangle,draw=black] {$Y_1$};
\node (y2) at (3,0) [rectangle,draw=black] {$Y_2$};
\node (yk) at (5,0) [rectangle,draw=black] {$Y_k$};
\draw[very thick] (x1) to (z1);
\draw[very thick] (x2) to (z1);
\draw[very thick] (y1) to (z2);
\draw[very thick] (y1) to (z2);
\draw[very thick] (y2) to (z3);
\draw[very thick] (yk) to (zk);
\draw[very thick] (z1) to (z2);
\draw[very thick] (z2) to (z3);
\draw[very thick] (z3) to (ppp);
\draw[very thick] (ppp) to (zk);
\draw[very thick] (zk) to (zk+1);
\draw[very thick] (x3) to (zk+1);
\draw[very thick] (x4) to (zk+1);
\end{tikzpicture}
\label{fig:semiregular}\end{equation}
with (without loss of generality) $x_1\le x_2$ and $x_3\le x_4$, we additionally have $x_2\le x_3$ or $x_4\le x_1$. There is a unique semi-regular tree shape for every $n\ge 4$ which is completely characterized. 

\begin{remark}
  Inequality (\ref{eq:locNNI}) is just semi-regularity of $\tau$ as defined above and in \cite{SWW11}, but restricted to adjacent vertices.
  \end{remark}

We are now in the position to state and prove the following theorem.

\begin{theorem}\label{thm:regular}
  Let $\map f{\set{1,\dots,\lfloor n/2\rfloor}}\NRp$ be strictly increasing and strictly concave, i.e. $2f(k)\ge f(k-1)+f(k+1)\mbox{ for all } k=2,\dots,\lfloor n/2\rfloor-1$. If  $\tau$ is a minimal point of $\Phi_f$ then $\tau$ is semi-regular.    
\end{theorem}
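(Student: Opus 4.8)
The plan is to prove the contrapositive: if $\tau\in\mathcal T_n^2$ is not semi-regular, then it is not a minimal point of $\Phi_f$, which I will witness by producing a tree $\tau'$ with $\Phi_f(\tau')<\Phi_f(\tau)$. So suppose $\tau$ is not semi-regular. Then there is a representation as in the definition of semi-regularity (see (\ref{fig:semiregular})), with $x_1\le x_2$ and $x_3\le x_4$, for which the condition fails, i.e. $x_2>x_3$ and $x_4>x_1$. Put $\delta_a=x_2-x_3$ and $\delta_b=x_4-x_1$; both are strictly positive \emph{precisely because} the two end-pairs interleave, and this is exactly where the failure of semi-regularity is used.

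Next I would consider the two trees obtained from $\tau$ by un-interleaving the ends. Let $\tau_a$ arise from swapping the whole subtrees $X_2$ and $X_3$ (so $z_1$ carries $X_1,X_3$ and $z_{k+1}$ carries $X_2,X_4$), and let $\tau_b$ arise from swapping $X_1$ and $X_4$; both are binary trees in $\mathcal T_n^2$. The key bookkeeping observation is that such a swap relocates entire subtrees, so every split induced by an edge interior to one of $X_1,\dots,X_4,Y_1,\dots,Y_k$, and every split separating one of these subtrees from the rest, keeps its size; only the splits along the backbone $z_1,\dots,z_{k+1}$ change. Indexing the backbone edges by the leaf count $a_j$ on the left side, the swaps act by $a_j\mapsto a_j-\delta_a$ and $a_j\mapsto a_j+\delta_b$ respectively. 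Writing $h(a)=f(\min(a,n-a))$, which satisfies $h(a)=h(n-a)$ and, since $f$ is strictly increasing and strictly concave, is itself strictly concave on the relevant range, one gets $\Phi_f(\tau_a)-\Phi_f(\tau)=\Delta_a$ and $\Phi_f(\tau_b)-\Phi_f(\tau)=\Delta_b$, where
\begin{displaymath}
\Delta_a=\sum_{j=1}^k\bigl(h(a_j-\delta_a)-h(a_j)\bigr),\qquad \Delta_b=\sum_{j=1}^k\bigl(h(a_j+\delta_b)-h(a_j)\bigr).
\end{displaymath}

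The heart of the argument is then one application of concavity. Since $a_j$ is the convex combination $\frac{\delta_b}{\delta_a+\delta_b}(a_j-\delta_a)+\frac{\delta_a}{\delta_a+\delta_b}(a_j+\delta_b)$, strict concavity of $h$ gives, for every $j$, the strict inequality $h(a_j)>\frac{\delta_b}{\delta_a+\delta_b}h(a_j-\delta_a)+\frac{\delta_a}{\delta_a+\delta_b}h(a_j+\delta_b)$. Summing over $j$ and clearing denominators yields $\delta_b\Delta_a+\delta_a\Delta_b<0$. As $\delta_a,\delta_b>0$, at least one of $\Delta_a,\Delta_b$ is strictly negative, so one of $\tau_a,\tau_b$ strictly decreases $\Phi_f$, contradicting the minimality of $\tau$. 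Hence every minimal point of $\Phi_f$ is semi-regular.

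I expect the main obstacle to be recognising that a single un-interleaving swap need not, by itself, decrease $\Phi_f$: sliding all backbone splits in one direction lowers the sizes of the ``small-side'' splits but raises those that have already passed $n/2$, so the sign of $\Delta_a$ on its own is genuinely indeterminate. The decisive move is to play the two opposite slides $\tau_a,\tau_b$ against each other and let concavity do the averaging, which is precisely what the positive combination $\delta_b\Delta_a+\delta_a\Delta_b$ captures; note that the case $k=1$ recovers, and strengthens, the adjacent NNI-local condition (\ref{eq:locNNI}). Two secondary points need care: confirming that only the $k$ backbone splits change size (so the $Y_i$ and the interior structure of the $X_i$ cancel out of $\Delta_a,\Delta_b$), and verifying that $h$ inherits \emph{strict} concavity from $f$ across the peak at $n/2$, including the flat two-point top when $n$ is odd, so that the Jensen step is strict; under merely weak concavity the same computation still yields a semi-regular minimizer but no longer pins down the minimizing shape uniquely.
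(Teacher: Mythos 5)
Your proposal is correct and is essentially the paper's own proof: you construct the same two trees (your $\tau_a,\tau_b$ are the paper's $\tau_1,\tau_2$, obtained by swapping $X_2\leftrightarrow X_3$ and $X_1\leftrightarrow X_4$), use the same symmetric extension $h(a)=f(\min(a,n-a))$ of $f$, and your weighted inequality $\delta_b\Delta_a+\delta_a\Delta_b<0$ is exactly the paper's statement $\Phi_f(\tau)>\lambda\Phi_f(\tau_1)+(1-\lambda)\Phi_f(\tau_2)$ with $\lambda=\frac{x_2-x_3}{x_2-x_3+x_4-x_1}$, applied via strict concavity to the backbone split sizes. If anything, your write-up is slightly more careful than the paper's, since you explicitly verify that the symmetrized $h$ remains strictly concave across the peak at $n/2$ (using strict monotonicity of $f$), a point the paper passes over with ``first observe that we can extend $f$''.
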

\begin{proof}
  First observe that we can extend $f$ to a concave function $\map f{\set{1,\dots,n-1}}\NRp$ with $f(n-k)=f(k)$. 
Now consider the situation in (\ref{fig:semiregular}) and assume $x_2>x_3$ and $x_4>x_1$. Fix the trees 
\begin{displaymath}
\tau_1=\qquad\begin{tikzpicture}[xscale=1,yscale=0.6, baseline =30 ]
\node (x1) at (0,3) [rectangle,draw=black] {$X_1$};
\node (x2) at (0,1) [rectangle,draw=black] {$X_3$};
\node (z1) at (1,2)  {$\bullet$};
\node (z2) at (2,2)  {$\bullet$};
\node (z3) at (3,2)  {$\bullet$};
\node (ppp) at (4,2)  {$\cdots$};
\node (zk) at (5,2)  {$\bullet$};
\node (zk+1) at (6,2)  {$\bullet$};
\node (x3) at (7,3) [rectangle,draw=black] {$X_2$};
\node (x4) at (7,1) [rectangle,draw=black] {$X_4$};
\node (y1) at (2,0) [rectangle,draw=black] {$Y_1$};
\node (y2) at (3,0) [rectangle,draw=black] {$Y_2$};
\node (yk) at (5,0) [rectangle,draw=black] {$Y_k$};
\draw[very thick] (x1) to (z1);
\draw[very thick] (x2) to (z1);
\draw[very thick] (y1) to (z2);
\draw[very thick] (y1) to (z2);
\draw[very thick] (y2) to (z3);
\draw[very thick] (yk) to (zk);
\draw[very thick] (z1) to (z2);
\draw[very thick] (z2) to (z3);
\draw[very thick] (z3) to (ppp);
\draw[very thick] (ppp) to (zk);
\draw[very thick] (zk) to (zk+1);
\draw[very thick] (x3) to (zk+1);
\draw[very thick] (x4) to (zk+1);
\end{tikzpicture}
\end{displaymath}
and
\begin{displaymath}
\tau_2=\qquad\begin{tikzpicture}[xscale=1,yscale=0.6, baseline =30 ]
\node (x1) at (0,3) [rectangle,draw=black] {$X_4$};
\node (x2) at (0,1) [rectangle,draw=black] {$X_2$};
\node (z1) at (1,2)  {$\bullet$};
\node (z2) at (2,2)  {$\bullet$};
\node (z3) at (3,2)  {$\bullet$};
\node (ppp) at (4,2)  {$\cdots$};
\node (zk) at (5,2)  {$\bullet$};
\node (zk+1) at (6,2)  {$\bullet$};
\node (x3) at (7,3) [rectangle,draw=black] {$X_3$};
\node (x4) at (7,1) [rectangle,draw=black] {$X_1$};
\node (y1) at (2,0) [rectangle,draw=black] {$Y_1$};
\node (y2) at (3,0) [rectangle,draw=black] {$Y_2$};
\node (yk) at (5,0) [rectangle,draw=black] {$Y_k$};
\draw[very thick] (x1) to (z1);
\draw[very thick] (x2) to (z1);
\draw[very thick] (y1) to (z2);
\draw[very thick] (y1) to (z2);
\draw[very thick] (y2) to (z3);
\draw[very thick] (yk) to (zk);
\draw[very thick] (z1) to (z2);
\draw[very thick] (z2) to (z3);
\draw[very thick] (z3) to (ppp);
\draw[very thick] (ppp) to (zk);
\draw[very thick] (zk) to (zk+1);
\draw[very thick] (x3) to (zk+1);
\draw[very thick] (x4) to (zk+1);
\end{tikzpicture}
\end{displaymath}

We set now  $\lambda=\frac{x_2-x_3}{x_2-x_3+x_4-x_1}$, and $z_j=\sum_{m=1}^jy_j$ where $z_0=0$. This way we obtain  from strict concavity and $$x_1+x_2+z_j=\lambda (x_1+x_3+z_j) +(1-\lambda)(x_4+x_2+z_j)$$that
\begin{eqnarray*}
 \lefteqn{ \Phi_f(\tau)-  \lambda \Phi_f(\tau_1)-(1-\lambda)\Phi_f(\tau_2)}\\
&=&\sum_{j=0}^k
f(x_1+x_2+z_j)-\lambda f(x_1+x_3+z_j) -(1-\lambda)f(x_4+x_2+z_j)\\
&>&0.
\end{eqnarray*}
This contradicts minimality of $\tau$, so the assumption was wrong. This leads to the desired result.
\end{proof}
\begin{remark}
  Interestingly, the trees $\tau_1,\tau_2$ are exactly at NNI distance 2 from $\tau$.  

It looks like strict concavity is not needed, but we are still lacking a proof for this case. Furthermore, we see that the NNI-neighborhood for two NNI-moves gives a sufficient criterion for minimality (and maximality) and that the NNI-neighborhood for one NNI-move is not sufficient for $n\ge8$.  This is consistent with the empirical observation that tree search algorithms using (too small) NNI-neighborhoods often get stuck in local minima \cite{KL15}.   
\end{remark}

 \begin{example} Also for non-concave $f$ the semi-regular pattern can be the minimal point. For $n=12$ we saw that there are 3 different Pareto minima. 
 For the convex function $f(k)=k^2$ we obtain the values $72,79,88$ for the split sequences in (\ref{eq:12splits}). This means the semi-regular pattern $(2,2,2,2,2,2,4,4,4)$ is still the minimizer.  
\end{example}



We conclude this section with the following lemma which is used in a later example.

 \begin{lemma}\label{bounds}
  Let $\tau \in \mathcal{T}_n^2$ with $n\geq 4$. Then, we have $\lceil  n/3\rceil \le s(\tau)_{n-3}\le\lfloor{ n}/2\rfloor.$
 \end{lemma}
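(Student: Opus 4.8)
The plan is to first read off what $s(\tau)_{n-3}$ actually is. Since $s(\tau)$ is the increasingly ordered sequence of the $n-3$ split sizes, its last entry is the \emph{largest} split size, i.e. $s(\tau)_{n-3}=\max_{\sigma\in\Sigma^*(\tau)}\norm\sigma$. The upper bound is then immediate: any split $\sigma=A|B$ has $\norm\sigma=\min(|A|,|B|)\le\lfloor n/2\rfloor$, so in particular the maximum over $\Sigma^*(\tau)$ is at most $\lfloor n/2\rfloor$. The whole content of the lemma is therefore the lower bound $s(\tau)_{n-3}\ge\lceil n/3\rceil$, which I would establish by an extremal/contradiction argument.

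First I would pick an inner edge $e^*$ realizing the maximum, inducing a split $A|B$ with $|A|\le|B|$ and $a:=|A|=\norm{A|B}=s(\tau)_{n-3}$; note $a\ge 2$ because every non-trivial split has size at least $2$. Assume toward a contradiction that $a<\lceil n/3\rceil$; as $a$ is an integer this forces $a<n/3$, hence $|B|=n-a>2n/3>2a$. In particular $|B|\ge 2$, so the endpoint $u$ of $e^*$ on the $B$-side is an internal degree-$3$ vertex, and its two edges other than $e^*$ partition $B$ into nonempty subtree leaf-sets $B_1,B_2$ with $|B_1|+|B_2|=n-a$. The key step is to show $|B_1|\le a$ and, symmetrically, $|B_2|\le a$. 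Consider the edge from $u$ into $B_1$. If it is an inner edge it induces the split $B_1\mid(A\cup B_2)$, whose size $\min(|B_1|,n-|B_1|)$ is $\le a$ by maximality of $e^*$; since $B_2$ is nonempty we have $|B_1|\le n-a-1<n-a$, so $|B_1|>n/2$ would force $n-|B_1|\le a$, i.e. $|B_1|\ge n-a$, a contradiction. Hence $|B_1|\le n/2$, the split size is exactly $|B_1|$, and $|B_1|\le a$. If instead the edge into $B_1$ is pendant, then $|B_1|=1\le a$ directly. Either way $|B_1|\le a$, and the same for $|B_2|$. Summing gives $n-a=|B_1|+|B_2|\le 2a$, i.e. $n\le 3a<n$, which is absurd. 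Therefore $a\ge\lceil n/3\rceil$, completing the proof.

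I expect the main obstacle to be the bookkeeping around \emph{non-triviality}: the maximality of $e^*$ only controls inner edges, so one must separately handle the case where an edge leaving $u$ is pendant (a single leaf), and one must first check that $u$ is internal so that the two further edges exist at all — both of which are guaranteed here by $a<n/3$ forcing $|B|$ to be large. An alternative route that sidesteps the pendant-edge case distinction is to invoke a centroid vertex $v$ whose three incident subtrees each carry at most $\lfloor n/2\rfloor$ leaves: the heaviest of the three then has at least $\lceil n/3\rceil$ leaves, and its incident edge is automatically inner (its far side has $\ge 2$ leaves, its near side $\ge\lceil n/2\rceil\ge 2$), giving a non-trivial split of the required size. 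However, proving existence of such a $v$ (by repeatedly moving toward the heavy side and using a minimality/potential argument on the maximum subtree size) costs about as much as the direct extremal argument above, so I would present the direct version as the main proof.
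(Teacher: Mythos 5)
Your proof is correct and takes essentially the same route as the paper: both identify $s(\tau)_{n-3}$ as the maximal split size with trivial upper bound $\lfloor n/2\rfloor$, pass to the degree-3 endpoint of the edge realizing it, bound the two further subtrees hanging there by $s(\tau)_{n-3}$ via maximality, and conclude $3\,s(\tau)_{n-3}\ge n$. If anything, your write-up is more careful than the paper's, which glosses over exactly the two points you flag — the pendant-edge case, where maximality over non-trivial splits does not directly apply, and the step excluding the branch $n-|B_i|\le a$ of the minimum.
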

 \begin{proof}
   Let $\sigma$ be the split the size of which is  $ s(\tau)_{n-3}$. Consider the edge $e$ of $\tau$ corresponding to $\sigma$. Note that $e$ splits $\tau$ into two parts, which we call the left-hand part and the right-hand part. Moreover, $e$  must be an inner edge (as $n \geq 4$ and thus there is at least one inner edge giving rise to one non-trivial split, so the split of size $ s(\tau)_{n-3}$, which is maximal, has to refer to a non-trivial split). Without loss of generality assume that the left-hand part of $\tau$ corresponds to $s(\tau)_{n-3}$. Then, the left-hand side end vertex of edge $e$ must have degree 3 (as $e$ is an inner edge). The splits corresponding to the  two other edges coincident with this vertex must have a size not larger than $ s(\tau)_{n-3}$. Thus $3 \cdot s(\tau)_{n-3}\ge n$ and the lower bound is derived. The upper bound is trivial.
 \end{proof}

  \section{Applications}

As we already mentioned in the introduction, functionals of the kind analyzed in the present manuscript have recently occurred in various contexts, some of which we want to mention here.

  \begin{example}\label{ex:gamma}
    In \cite{HW12} the authors considered the functional
    \begin{displaymath}
      \Gamma(\tau)=\sum_{A|B\in \Sigma^*(\tau)} |A|\cdot |B|,
    \end{displaymath}
    which is equivalent to $\Phi_f$ for $f(k)=k\cdot(n-k)$. They
    considered the functional only on $\mathcal T_n^2$, but for the
    maximum there is no difference anyway.  The application of  Theorem \ref{thm:1} is just the same as 
    \cite[Lemma 4.1]{HW12}, where we adapted our proof of Theorem \ref{thm:1} from. Further, the minimum is attained for the semi-regular shape, since $f$ is even strictly increasing.

    There is also a close relation between $\Gamma(\tau)$ and the Wiener index of $\tau$ as described in the next section, see also \cite{SWW11}.
  \end{example}

  \begin{example}\label{ex:MP} Recently, the maximum parsimony distance was defined independently in \cite{FK14} and \cite{MW15}. We will now briefly explain this concept before we show how it is related to the topic of this manuscript.  
  
Recall that a character $\chi$ is a function $\chi: X_\tau \rightarrow \mathcal{C}$ from the leaf set $X_\tau$ of $\tau$ to an alphabet  $\mathcal{C}$. In biology, $\mathcal{C}$ often refers to the four nucleotides in the DNA alphabet, but we do not restrict the definition to this case. Then, an extension $\bar{\chi}: V(\tau)   \rightarrow \mathcal{C}$ is a function from the vertex set $V(\tau)$ of $\tau$ to $\mathcal{C}$ which agrees with $\chi$ on the leaf set $X$. For a character $\chi$ with extension $\bar{\chi}$, the changing number of  $\bar{\chi}$, denoted $ch(\bar{\chi})$, is defined as the number of edges $e=\{u,v\} $ such that $\bar{\chi}(u)\neq \bar{\chi}(u)$, and the parsimony score $PS(\chi,\tau)$ is then defined as the minimum number of changes over all extensions $\bar{\chi}$ of $\chi$ on $\tau$, i.e. $PS(\chi,\tau)=\min\limits_{\bar{\chi}} ch(\bar{\chi})$. Note that the parsimony score of a character on a tree can be easily calculated with the famous Fitch algorithm \cite{fitch}. 

Now, the parsimony distance between two trees $\tau_1$ and $\tau_2$ is defined as follows: $d_{MP}(\tau_1,\tau_2)=\max_{\chi} |PS(\chi,\tau_1)-PS(\chi,\tau_2)|$. It has been shown \cite{FK14,FK15} that it is NP-hard to calculate this distance for two given trees (even if both trees are binary). We now consider neighborhoods induced by this metric.

In \cite{MW15}, the size of the 1-neighborhood of a tree $\tau\in \mathcal T_n$ with respect to $d_{MP}$ was derived to be
    \begin{displaymath}
      n_p(\tau)=4\sum_{A|B\in \Sigma^*(\tau)}|A|\cdot |B| -4(n-2)(n-3)+2|V_2(\tau)|+6|V_3(\tau)|
    \end{displaymath}
where $V_q(\tau)$, $q=1,2,3$ is the set of vertices of degree $q$ after removal of all pendant edges. 
Clearly, $V_1(\tau)$ are just all cherries. Unfortunately,  $|V_2(\tau)|$ is not of the form $\Phi_f$, but we can provide bounds on the minimal and maximal values, since we have
\begin{displaymath}
  4\Gamma(\tau)+2(n-2-|V_1(\tau)|)\le n_p(\tau)+4(n-2)(n-3)\le 4\Gamma(\tau)+6(n-2-|V_1(\tau)|).
\end{displaymath}
So we get \begin{displaymath}
  4\Gamma(\tau)-c |V_1(\tau)|=\Phi_f(\tau)
\end{displaymath}
for
\begin{displaymath}
  f(k)=\left\{
    \begin{array}[c]{cl}
      4k\cdot(n-k)&k>2,\\
8(n-2)-c&k=2.
    \end{array}
\right.
\end{displaymath}
Clearly, for $c<8(n-2)$, particularly for $c<32$, $f$ is strictly increasing and we find maximal values at caterpillars. 

Note that if you restrict the parsimony distance to binary characters, denoted $d_{MP}^2$, and define $f(\tau)=d_{MP}^2(\tau,\tau_0)$, where $\tau_0$ denotes the star-tree, i.e. the tree with no inner edges, then by Lemma \ref{bounds}, we have $f(\tau) \leq \lfloor {n}/{2}\rfloor-1$ and $f(\tau) \geq \lfloor {n}/{3}\rfloor-1$. It can easily be seen that $f$ is maximized by all trees containing an edge inducing a split $\sigma=A|B$ with $|A|=\lfloor {n}/{2}\rfloor$ and $|B|=\lceil {n}/{2}\rceil$. Caterpillar trees have this property,   but they are not the only ones. On the other hand, $f$ is minimized by all trees containing a node which is adjacent to three subtrees of size at least $\lfloor{n}/{3}\rfloor$. Therefore, the minimum is not unique, either.

  \end{example}

\begin{example}
Consider the functional $\tau\mapsto |V_1(\tau)|$, the number of cherries. Clearly, we have to choose   $f(k)=\left\{
  \begin{array}[c]{cl}
    1&k=2,\\
0&\mbox{otherwise.}
  \end{array}
\right.$
Since $f$ is decreasing and convex (on ${\set{2,\dots,\lfloor n/2\rfloor}}$), the minimum number 2 of cherries is attained by caterpillars only. The maximal number of cherries is yielded by the semi-regular shapes.       
  \end{example}

  \begin{example}
    Another  application are estimates of the diameter of $\mathcal
    T_n$ with respect to the Gromov-type $\ell^p$-distances on
    $\mathcal T_n$ introduced in \cite{Lie15} for $p=1,2$. We do not
    want to repeat the lengthy definition of these distances here. It
    is enough to note that Example 3 from that paper computed
    $D_p(\tau,\tau')$ if $\tau$ and $\tau'$ differ by one split, say
    $A|B$. Then
    \begin{displaymath}
      D_p(\tau,\tau')=\left\{
        \begin{array}[c]{cl}
          \min(|A|,|B|)& \mbox{for } p=1,\\
          \sqrt{|A|\cdot|B|/n}&\mbox{for }p=2.
        \end{array}\right.
    \end{displaymath}
    So we fix the functions $\tilde f_p$, $p=1,2$, $\tilde f_1(k)=k$,
    $\tilde f_2(k)=\sqrt{k\cdot(1-k/n)}$ which are both strictly increasing
    and concave on $\set{2,\dots,\lfloor n/2\rfloor}$. Let $\tau_0\in\mathcal
    T_n$ denote the star tree, i.e. $\Sigma^*(\tau_0)=\emptyset$. We
    see for all trees $\tau\in\mathcal T_n $:
    \begin{displaymath}
      D_p(\tau_0,\tau)\le \Phi_{f_p}(\tau).
    \end{displaymath}
    Theorem \ref{thm:1} gives us for even $n$ the estimates
    \begin{displaymath}
      \mathrm{diam}_{D_1}(\mathcal T_n)\le 2\sum_{k=2}^{n/2} 2k= n^2-2n-4
    \end{displaymath}
    and
    \begin{displaymath}
      \mathrm{diam}_{D_2}(\mathcal T_n)\le 2\sum_{k=2}^{n/2} 2\sqrt{k\cdot(1-k/n)}\sim 4n^{3/2}\int_0^{1/2}\sqrt{x(1-x)}\mathrm{d}x=\pi n^{3/2}
    \end{displaymath}
    on the diameter of $\mathcal T_n$.

    Unfortunately, these bounds are less tight that the ones derived
    in \cite[Lemma 9]{Lie15}. Nevertheless the same technique would  apply to any other tree distance if we had estimates for Robinson-Foulds moves \cite{RF81} in terms of $\norm\sigma$.

    At least, Theorem \ref{thm:1} supports now the conjecture that the maximal distance
   is attained between two caterpillars. But note that  we are just
    dealing with upper bounds on the diameter here.
  \end{example}
  \begin{example}\label{ex:max}
    Another simple functional is defined through
    \begin{displaymath}
      \Phi(\tau)=\max\set{\norm\sigma:\sigma \in \Sigma^*(\tau)}=s(\tau)_{n-3}.
    \end{displaymath}
This functional is not of the form $\Phi_f$. Still, it is monotonously increasing with respect to $\prec$. Its extremal values $\lfloor n/2\rfloor,\lceil n/3\rceil$ were derived in Lemma \ref{bounds}. By the previous section, they are achieved  by caterpillars and some Pareto minimum, possibly among others. 

For $n=8$  we computed the Pareto minima, see Example \ref{ex:min8}. $(2,2,2,3,3)$ realizes the minimal value $3$ for $\Phi$. But, the other (semi-regular) split size sequence $(2,2,2,2,4)$ realizes the \emph{maximal} value for $\Phi$. 
  \end{example}
  
\section{Discussion}

We derived necessary and sufficient criteria to compute minimal and maximal points of the functionals $\Phi_f$. There were a lot of specific functionals of this kind considered in the past, see the previous section. By our theory, it is now less surprising that quite often caterpillars yielded maximal values and the semi-regular trees yielded minimal ones. 
But, we also saw that sometimes the minimum of $\Phi_f$ could be achieved by a different Pareto minimum. So the set of split size sequences $\mathcal S_n$ and their Pareto minima seems quite interesting to study.  For instance, it would be nice to know whether all Pareto minima, not only the semi-regular one, correspond to a unique tree shape.  

The $\Gamma-$index from Example \ref{ex:gamma} is closely related to the Wiener index of graphs, it is its leaf-restricted form. Essentially it holds
\begin{displaymath}
  \Gamma( \tau)=\sum_{u,v\in L(\tau)} d_\tau(u,v),
\end{displaymath}
where $d_\tau$ is the metric induced by $\tau$ and $L(\tau)$ are the leaves of $\tau$, see \cite{SWW11}.  This kind of index is natural, unique  and of course a shape invariant of $\tau$. 

It is easy to see that  we can derive a whole family of similar topological indices if we introduce another metric on the tree, still just depending on the shape of $\tau$. More precisely, introduce   
for any $\sigma\in\Sigma(\tau)$ a weight $w(\sigma,\tau)=g(\norm\sigma)\ge 0$ and the corresponding (semi-)metric $d_\tau^w$ on $L(\tau)$. Then, setting  
\begin{displaymath}
  \Gamma^w( \tau)=\sum_{u,v\in L(\tau)} d^w_T(u,v)
\end{displaymath}
we obtain $\Gamma^w(\tau)=\Phi_f(\tau)$ for $f(k)=g(k)k(n-k)$.  If $g$ is increasing, $f$ is increasing as well. Unfortunately, concavity is not so easy to derive. Nevertheless, our theory will apply to many of these $\Gamma^w-$indices.    
 
Surely, these indices are linear in $d_\tau$. It is easy to derive similar functionals which are quadratic in $d_\tau$ or depending on triple partitions $A|B|C$ compatible with the tree $\tau$. For instance, the functional $\Phi(\tau)=|V_2(\tau)|$ from Example \ref{ex:MP}  is of this kind. It would be quite valuable  to extend our theory to this type of functionals.

\end{document}